\newtheorem{theorem}{Theorem}[section]
\newtheorem{lemma}[theorem]{Lemma}
\newtheorem{corollary}[theorem]{Corollary}
\newtheorem{claim}[theorem]{Claim}
\newtheorem{observation}[theorem]{Observation}
    \def\independenT#1#2{\mathrel{\setbox0\hbox{$#1#2$}%
    \copy0\kern-\wd0\mkern4mu\box0}}
\def\opt{\text{\sc opt}}
\def\G{\mathcal{G}}
\def\cal{\mathcal}
\def\G{\cal G}
\def\P{\cal P}
\def\S{\cal S}
\def\opt{\text{\sc opt}}
\begin{document}
\addtocounter{page}{-1}

\bibliographystyle{plain}
\renewcommand{\thefootnote}{\fnsymbol{footnote}}

\begin{center}
{\Large \bf On the Implications of Lookahead Search in Game Playing}\\ [1cm]

{\sc Vahab Mirrokni}\footnote{Google Research, Google. Email: {\tt mirrokni@google.com} },
{\sc Nithum Thain}\footnote{Oxford-Man Institute, University of Oxford.
Email: {\tt nithum@gmail.com} }
and
{\sc Adrian Vetta}\footnote{Department of Mathematics and Statistics, 
and School of Computer Science, McGill University. 
Email: {\tt vetta@math.mcgill.ca} }\\[.7cm]

\today\\[1cm]

\end{center}

\renewcommand{\thefootnote}{\arabic{footnote}}
\begin{center}
\begin{quote} 
{ \small
{\bf Abstract.}
Lookahead search is perhaps the most natural and widely used game playing strategy.
Given the practical importance of the method, the aim of this paper is to provide a
theoretical performance examination of lookahead search in a wide variety of applications.

To determine a strategy play using {\em lookahead search}, each agent predicts multiple levels of possible re-actions to 
her move (via the use of a search tree), 
and then chooses the play that optimizes her future payoff accounting for these re-actions.
There are several choices of optimization function the agents can choose, where the most appropriate choice of function 
will depend on the specifics of the actual game - we illustrate this in our examples.
Furthermore, the type of search tree chosen by computationally-constrained agent can vary.  
We focus on the case where agents can evaluate only a bounded number, $k$, of moves into the future.
That is, we use depth $k$ search trees and call this approach {\em k-lookahead search}.

We apply our method in five well-known settings: AdWord auctions; industrial organization (Cournot's model); 
congestion games; valid-utility games and basic-utility games; cost-sharing network design games.
We consider two questions. First, what is the expected social quality of outcome when agents apply lookahead search?
Second, what interactive behaviours can be exhibited when players use lookahead search? 
 
Myopic game playing (whose corresponding equilibria are Nash equilibria), where each player can only foresee the 
immediate effect of her own actions, is the special case of $1$-lookahead search. Thus, for the first question, it is natural
to ask whether social outcomes improve when players use more foresight than in myopic behaviour. 
The answer depends on the game played: \\
\noindent(i) In Adword auctions (or generalized second-price auctions), we show that $2$-lookahead game playing 
results in outcomes that are always optimal to within a constant factor; in contrast, myopic game play can produce 
arbitrarily poor equilibrium outcomes. \\
\noindent(ii) For the Cournot game, applying $2$-lookahead leads to a $12.5\%$ increase in output and a $5.5\%$ 
increase in social surplus compared with myopic competition. Similar bounds arise as the length $k$ of foresight
increases.\\
\noindent(iii) For congestion games, as with myopic game playing, 
lookahead search leads to constant factor qualitative guarantees.\\
\noindent(iv) For basic-utility games, on the other hand, whilst myopic game playing always leads to 
constant factor approximations, additional foresight can lead to arbitrarily bad solutions!\\
\noindent(v) In a simple Shapley network design game, qualitative guarantees improve with the
length of foresight.\\
Regarding the second question, a variety of interesting game playing characteristics also arise with lookahead search.
Stackelberg leader-follower behaviours can be induced when the players have asymmetric computational power.
For example, Stackelberg equilibria can be produced in the Cournot game.
Lookahead search can also generate ``uncoordinated" cooperative behaviour! 
An example of this is shown for the Shapely network design game.

}
\end{quote}
\end{center}

\renewcommand{\theenumi}{(\alph{enumi})}
\renewcommand{\labelenumi}{(\alph{enumi})}

\addtocounter{footnote}{-3}

\renewcommand{\theenumi}{(\alph{enumi})}
\renewcommand{\labelenumi}{(\alph{enumi})}

\newpage

\section{Introduction}\label{sec:intro}

Our goal here is not to prescribe how games should be played. Rather, we wish to analyse
how games actually are played. To wit we consider the strategy of lookahead search,
described by Pearl \cite{Pea84}  in in his classical book on heuristic search as being used by
 ``almost all game-playing programs". 
To understand the lookahead method and the reasons for its ubiquity in practice, consider an agent trying to decide
upon a move in a game. Essentially, her task is to evaluate each of her possible moves
(and then select the best one). Equivalently, if she know the values of each child node 
in the game tree then she can calculate the value of the current node.
However, the values of the child nodes may also be unknown! 
Recall two prominent ways to deal with this. Firstly,  crude estimates based upon local information could be used 
to assign values to the children; this is approach taken by {\em best response dynamics}. 
Secondly, the values of the children can be determined recursively by finding the values of the 
grandchildren. At its computation extreme, this latter approach in a finite game is Zermelo's algorithm
- assign values to the leaf nodes\footnote{Often the values of the leaf nodes will be true values rather than 
estimates, for example when they correspond to end positions in a game.} 
of the game tree and apply backwards induction to
find the value of the current node. 

Both these approaches are special cases of {\em lookahead search}: choose a 
local search tree $T$ rooted at the current node in the game tree; 
valuations (or estimates thereof) are given to leaf nodes of $T$; valuations for internal
tree nodes are then derived using the values of a node's immediate descendants 
via backwards induction; a move is then selected corresponding to the value assigned the root. 
For best response dynamics the search tree is simply the star
graph consisting of the root node and its children. With unbounded computational power, 
the search tree becomes the complete (remaining) game tree used by Zermelo's algorithm.

We remark that the actual shape of the search tree $T$ is chosen {\em dynamically}. 
For example, if local information is sufficient to provide a reliable estimate
for a current leaf node $w$ then there is no need to grow $T$ beyond $w$.
If not, longer branches rooted at $w$ need to be added to $T$.
Thus, despite our description in terms of ``backwards induction", lookahead search is a very forward 
looking procedure. Subject to our computational abilities, we search further forward only if we think it will help 
evaluate a game node. Indeed, in our opinion, it is this forward looking aspect that makes lookahead search such a 
natural method, especially for humans and for dynamic (or repeated) games.\footnote{
In contrast, strategies that are prescribed by axiomatic principles, equilibrium constraints, or notions of regret are 
much less natural for dynamic game players.}

Interestingly, the lookahead method was formally proposed as long ago as 1950 by Shannon \cite{Sha50},
who considered it a practical way for machines to tackle complex problems that
require ``general principles, something of the nature of judgement, and considerable trial and error, rather than 
a strict, unalterable computing process". To illustrate the method, Shannon described in detail
how it could be applied by a computer to play chess. 
The choice of chess as an example is not a surprise: as described the lookahead approach is particularly suited to game-playing.
It should be emphasised again, however, that this approach is natural for all computationally constrained agents,
not just for computers. Lookahead search is an instinctive strategic method utilised by human beings as well.
For example, Shannon's work was in part inspired by De Groot's influential psychology thesis  \cite{Gro46} on human chess
players. De Groot found that all players (of whatever standard) used essentially the same thought process 
- one based upon a lookahead heuristic. Stronger players were better at evaluating positions
and at deciding how to grow (prune or extend) the search tree but the underlying approach was always the same. 

Despite its widespread application, there has been little theoretical examination
of the consequences of decision making determined by the use of local search trees.
The goal of this paper is to begin such a theoretical analysis.
Specifically, what are the quantitative outcomes and dynamics in various games when players use 
lookahead search?

\subsection{{\sc Lookahead Search: The Model.}}\label{sec:conc}\ \\
Having given an informal presentation, let's now formally describe the lookahead method. 
Here we consider games with sequential moves that have complete information.
These assumptions will help simplify some of the underlying issues, but the lookahead approach
can easily be applied to games without these properties.

We have a strategic game $G(\P, \S, \{\alpha_i: i\in \P\})$. Here $\P$ is the set of $n$ players, 
$S_i$ is the set of possible strategies for $i\in \P$, $\S= (S_1\times S_2 \ldots \times S_n)$ is the strategy 
space, and $\alpha_i: \S \rightarrow  R$ is the payoff function for player $i\in \P$. A {\em state} $\bar{s}= (s_1, s_2, \ldots, s_n)$
is a vector of strategies $s_i\in S_i$ for each player $i\in \P$. 

Suppose player $i\in \P$ is about to decide upon a move.
Recall, with lookahead search, she wishes to assign a value to her current state node $\bar{s} \in \S$ that corresponds 
to the highest value of a child node. To do this she selects a search tree $T_i$ over the set of states of the game
rooted at $\bar{s}$.
For each leaf node $\bar{l}$ in $T_i$, player $i$ then assigns a valuation $\Pi_{j,\bar{l}}=\alpha_j(\bar{l})$ 
for each player $j$. 
Valuations for internal nodes in $T_i$ are then calculated by induction as follows: if player $p$ is destined to move 
at game node $\bar{v}$ then his valuation of the node is given by
$$\Pi_{p,\bar{v}} = \max_{u\in \mathcal{C}(\bar{v})} [ r_{p,\bar{v}} + \Pi_{p,\bar{u}}].$$
Here, $\mathcal{C}(\bar{v})$ denotes the set of children of $\bar{v}$ in $T_i$, and $r_{p,\bar{v}}$ 
is some additional payoff received by player $p$  at node $\bar{v}$.
Should $p$ choose the child $\bar{u}^*\in \mathcal{C}(\bar{v})$ then assume any non-moving player $j\neq p$
places a value of $\Pi_{j,\bar{v}} = r_{j,\bar{v}} + \Pi_{j,{\bar{u}^*}}$ on node $\bar{v}$. 
Then given values for children of the root node $\bar{s}$ of $T_i$, player $i$ is thus able to compute the lookahead payoff 
 $\Pi_{i,\bar{s}}$ which she uses to select a move to play at $\bar{s}$.
[The method is defined in an analogous manner if players seek to minimise rather than maximise
their "payoffs".]

After $i$ has moved, suppose player $j$ is then called upon to move.
He applies the same procedure but on a local search tree $T_j$ rooted at the new game node.
Note that $j$'s move may {\bf not} be the move anticipated by $i$ in her analysis. 
For example, suppose all the players use $2$-lookahead search. Then player $i$ calculates on the basis 
that player $j$ will use a $1$-lookahead search tree $T'_j$ when he moves -- because for computational 
purposes it is necessary that $T'_j\subseteq T_i$. But when he moves player $j$ actually uses 
the $2$-lookahead search tree $T_j$ and this tree goes beyond the limits of $T_i$.

\subsection{{\sc Lookahead Search: The Practicalities.}}\label{sec:practical}\ \\
Observe that there is still a great deal of flexibility in how the players implement the model. 
This versatility, we would argue, is a major strength (and another reason
underlying its ubiquity) and not a weakness of the method. For example, 
it accords well with Simon's belief, discussed in Section \ref{sec:related}, that behaviours should be adaptable.
We now give some examples of this adaptability and highlight those aspects that we analyse in this paper.

\noindent$\bullet$ {\bf \noindent Dynamic Search Trees.} 
Recall that search trees may be constructed dynamically. Thus, the exact shape of the search tree utilized will be heavily 
influenced by the current game node, and the experience and learning abilities of the players. 
Whilst clearly important in determining gameplay and outcomes, these influences are a distraction from our focal point, namely,
computation and dynamics in games in which players use lookahead search strategies.
Therefore, we will simply assume here that each $T_i$ is a breadth first search tree of depth $k_i$. Implicitly, $k_i$ is dependent 
on the computational facilities of player $i$.

\noindent$\bullet$ {\bf Evaluation Functions.}
Different players may evaluate leaf nodes in different ways. To evaluate internal nodes, as described above, we make the 
standard assumption that they use a $\max$ (or $\min$) function.
This need not be the case. For example, a risk-averse player may give a higher value to a node (that it does not own)
with many high value children than to a node with few high value children -- we do not consider such players here. 

\noindent$\bullet$ {\noindent \bf Internal Rewards or Not: Path Model vs Leaf Model.}
We distinguish between two broad classes of game that fit in this framework but are conceptually quite different. 
In the first category, payoffs are determined only by outcomes
at the end of game. Valuations at leaf nodes in the local search trees are then just estimates of the what the final outcome 
will be if the game reaches that point. Clearly chess falls into this category.
In the second category, payoffs can be accumulated over time - thus different paths with the same endpoints 
may give different payoffs to each player. Repeated games, such as industrial games over multiple time periods, 
can be modelled as a single game in this category. The first category is modelled by setting all internal 
rewards $r_{p,\bar{v}}=0$. Thus what matters in decision making
is simply the initial (estimated) valuations a player puts on the leaf nodes.
We call this the {\em leaf (payoff) model} as an agent then strives to reach a leaf of $T_i$ with as high a value as possible.
The second category arises when the internal rewards, $r_{p,\bar{v}}$, can be non-zero. Each agent then wishes to traverse
paths that allow for high rewards along the way.
More specifically, in this model, called the {\em path (payoff) model}, the internal reward is $r_{p,\bar{v}} = \alpha_p(\bar{v})$.

\noindent$\bullet$ {\noindent \bf Order of Moves: Worst-Case vs Average-Case.}
In multiplayer games, the order in which the players move may not be fixed.
This adds additional complexity to the decision making process, as the local search tree
will change depending upon the order in which players move. Here, we will
examine two natural approaches a player may use in this situation:  
{\em worst case lookahead} and {\em average case lookahead}.
In the former situation, when making a move, a risk-averse player will assume
that the subsequent moves are made by different players chosen by an adversary to minimize
that player's payoff.
In the latter case, the player will assume that each subsequent move is made
by a player chosen uniformly at random; we allow players to make consecutive moves. 
In both cases, to implement the method the player must perform calculations for multiple search trees.
This is necessary to either find the worst-case or perform expectation calculations.

\subsection{{\sc Techniques and Results.}}\label{sec:results}\ \\
We want to understand the social quality of outcomes that arise when computationally-bounded agents 
use $k$-lookahead search to optimise their {\em expected} or {\em worst-case} payoff 
over the next $k$ moves. Two natural ways we do this are via {\bf equilibria} and via the study of {\bf game dynamics}.
To explain these approaches, consider the following definition.
Given a lookahead payoff function, $\Pi_{i,\bar{s}}$, a
{\em lookahead best-response} move for player $i$, at a state $\bar{s}\in \mathcal{S}$,
is a strategy $s_i$ maximising her lookahead payoff, that is,  $\forall s'_i\in S_i$: 
$\Pi_{i,\bar{s}} \ge \Pi_{i,(\bar{s}_{-i},s'_i)}$.
[A move $s'_i$ for player $i$, at a state $\bar{s} \in \mathcal{S}$,
is {\em lookahead improving} if $\Pi_{i,\bar{s}} \le \Pi_{i,(\bar{s}_{-i},s'_i)}$.]
A {\em  lookahead equilibrium} is then a collection
of strategies such that each player is playing her lookahead best-response move
for that collection of strategies. Our focus here is on pure strategies.
Then, given a social value for each state, the {\em coordination ratio} (or price of anarchy) {\em of lookahead equilibria} 
is the worst possible ratio between the social value of a lookahead equilibrium and the optimal global social value. 

To analyse the dynamics of lookahead best-response moves, we examine the expected social value of 
states on polynomial length random walks on the
{\em lookahead state graph}, $\G$. This graph has a node for each state $s\in \S$
and an edge from $\bar{s}$ to a state $\bar{t}$ with a label $i\in \P$ if the only difference between $\bar{s}$ and $\bar{t}$ 
is that player $i$ changes strategy from $s_i$ to $t_i$, where $t_i$ is the lookahead best response move at $\bar{s}$. 
The {\em coordination ratio of lookahead dynamics}
is the worst possible ratio between the expected social value of states on a polynomially long random walk on $\G$
and the optimal global social value. 

For practical reasons, we are usually more interested in the dynamics of lookahead best-response moves than
in equilibria. For example, as with other equilibrium concepts, lookahead best-response moves may not lead 
to lookahead equilibria. Indeed, such equilibria may not even exist.
Typically, though, the methods used to bound the coordination ratio for $k$-lookahead equilibria can be combined with 
other techniques to bound the coordination ratio for $k$-lookahead dynamics. We show how to do this
for congestions games in Section \ref{sec:SR}; see also Goemans et al. \cite{GMV05} for several examples
with respect to $1$-lookahead dynamics. Consequently, for both simplicity and brevity, most of the results we
give here concern the coordination ratio for lookahead equilibria. We are particularly interested in discovering when 
lookahead equilibria guarantee good social solutions, and how outcomes vary with different levels of foresight ($k$).
We perform our analyses for an assortment of games including an AdWord auction game, the Cournot game,
congestion games, valid-utility games, and a cost-sharing network design game. 

We begin, in Section \ref{sec:ad}, by considering strategic bidding in an AdWord generalised second-price auction, 
and studying the social values of the allocations in the resulting equilibria. 
In particular, we show that $2$-lookahead game playing results in the optimal outcome or a constant-factor 
approximate outcome under the leaf and path models, respectively. This is in contrast to $1$-lookahead (myopic) game playing which
can result in arbitrarily poor equilibrium outcomes, and shows that more forward-thinking bidders
would produce efficient outcomes.

Second, in Section \ref{sec:Cournot}, we examine the Cournot duopoly game. 
Here two firms compete in producing a good consumed by a set of buyers via the choice of production quantities.
We study equilibria of these simple games resulting from
$k$-lookahead search. The equilibria of these simple games for myopic game playing, $k=1$, is well-understood.
For $k>1$, however, firms produce over $10\%$ more than if they were competing myopically; this is better for society 
as it leads to
around a $5\%$ increase in social surplus. Surprisingly, the optimal level of foresight for society is $k=2$.
Furthermore, we show that
Stackelberg behaviours arise as a special case of lookahead search where the firms have asymmetric computational abilities.

Third, in Section \ref{sec:SR}, we examine congestion games with linear latency functions, and study 
the average of delay of players in those games. We show that $2$-lookahead game playing
results in constant-factor approximate solutions. In particular, the coordination ratio of lookahead 
dynamics is a constant. These guarantees are similar to those obtained via $1$-lookahead.

Fourth, in Section \ref{sec:UG}, we consider two classes of resource sharing games, known as 
valid-utility and basic-utility games. For both of these games,
we show that lookahead game playing may result in very poor solutions.
For valid-utility games, we show $k$-lookahead can give a coordination ratio for lookahead dynamics
of $\Theta(\sqrt{n})$. Myopic game play can also give very poor solutions \cite{GMV05}, but 
additional foresight does not significantly improve outcomes in the worst case. For basic-utility games,
however, myopic game dynamics give a constant coordination ratio \cite{GMV05} whereas 
we show that $2$-lookahead game playing may result in $o(1)$-approximate social welfare with the leaf model. 
Thus, additional foresight in games need not lead to better outcomes, as is traditionally assumed in decision theory.

Finally, in Section \ref{sec:Shapley}, we present a simple example of a cost-sharing network design game that illustrates
how the use of lookahead search can encourage cooperative behaviour (and better outcomes) 
{\em without} a coordination mechanism.

Observe that our results show that lookahead search has different effects depending upon the game. 
It would be interested to study further which game structures lead to more beneficial outcomes when 
longer foresight is used, and which game structures lead to more detrimental outcomes.

\subsection{{\sc Background and Related Work.}}\label{sec:related}\ \\
This work is best viewed within the setting of {\em bounded rationality} pioneered by
Herb Simon. In Rational Choice Theory a {\em rational} agent (or economic man) makes decisions via utility maximisation.
Whilst the non-existence of economic man is not in doubt, rationality remains a central assumption in economic thought.
This is typically justified using an {\em as if} as expounded by Friedman \cite{Fri53}:
whether people are actually rationality or not is unimportant provided their actions can be viewed in a way that is consistent 
with rational decision making - that is, provided agents act as if they are rational.\footnote{{\small For example, a 
consumer whose purchasing strategy allocates fixed proportions of her budget to specific goods (regardless of price levels) 
can be viewed as rational consumer with a Cobb-Douglas utility function!}}
Friedman concluded that a model should be judged by it predictive value rather than by the
realism of its assumptions. On this scale rationality often (but not always) does very well.

However, motivated by considerations of computational power and predictive ability,
Simon \cite{Sim55} argued that ``the task is to replace the global rationality
of economic man with a kind of rational behaviour
that is compatible with the access to information
and the computational capacities that are actually
possessed by organisms, including man, in the kinds
of environments in which such organisms exist".
He argued that, instead of optimising, agents apply heuristics in 
decision making. An example of this being the {\em satisficing} heuristic:
agents search for feasible solutions, stopping when then discover
an outcome that achieves an aspired level of satisfaction\footnote{{\small Over time,
and depending upon what is found in the search,
this aspiration level may be changed.}}. 
We remark that the use of a search phase provides a fundamental distinction between rational
and boundedly rational agents. For rational agents the search is irrelevant as they will anyway make an 
optimal choice given the constraints of the problem. For agents of bounded rationality the form of the search can 
heavily influence decision making.

Interestingly, De Groot's work on chess players also heavily influenced Simon's general thinking on 
cognitive science.\footnote{{\small In fact, Simon sent his student George Baylor to help translate De Groot's work into English.}}
This is exemplified in his famous book with Newell on human problem solving \cite{NS72}, where 
humans are viewed as information processing systems.

The label bounded rationality is currently used in a number of disparate areas some of which actually go against 
the main thrust of Simon's original ideas; see Selten \cite{Sel01} and Rubenstein \cite{Rub98} for some discussion on this point.
Two schools of thought developed by psychologists, experimental economists, and behavioural economists 
are, however, well worth mentioning here. First, the {\em Heuristics and Biases} program espoused by Kahneman and Tversky
and, second, the {\em Fast and Frugal Heuristics} program espoused by Gigerenzer.
Whilst both programs agree that humans routinely use simple heuristics in decision making,
their philosophical outlooks are very different. The former program primarily looks for outcomes (caused by the use of heuristics) in violation of 
subjective excepted utility theory, and views such biases as a sign of irrationality
likely to lead to poor decision making. In contrast, the latter program views the use of heuristics
as natural and, in principle, entirely compatible with good decision making. 
For example, simple heuristics may 
be more robust to environmental changes and actually outperform methods based upon subjective excepted utility maximisation.
As with the work of Simon, for the fast and frugal heuristics school, the actual quality of an heuristic is assumed to be dependent upon the 
search - how to search and when to stop searching - and the choice of decision rule after the search is terminated. 
Clearly, the lookahead heuristic can be viewed in this light: there is a search (via a local search tree), there is a ``stopping rule"
(determined, for example, by computational constraints and by the expertise of the player), and there is a decision rule (backwards induction).  

The value of lookahead search in decision-making has been examined by the artificial intelligence community \cite{Nau83a}; for examples in effective diagnostics and real-time planning see  \cite{KRS92} and \cite{SKN09}. Lookahead search
is also related to the sequential thinking framework in game theory \cite{Nag95, SW94}.
However, compared to these works and the research carried out by the two schools above,
our focus is more theoretical and less experimental and psychological.
Specifically, we desire quantitative performance guarantees for our heuristics.

Our research is also related to works on the price
of anarchy in a game, and convergence of game dynamics to approximately optimal solutions \cite{MV04,GMV05} 
and to sink equilibria \cite{GMV05,FP08}. Numerous articles study the convergence rate of best-response dynamics 
to approximately optimal solutions \cite{CMS06,FFM08,AAEMS08,BHLR08}.  For example, polynomial-time
bounds has been proven for the speed of convergence to approximately optimal solutions
for approximate Nash dynamics in a large class of potential games~\cite{AAEMS08},
and for learning-based regret-minimisation dynamics for valid-utility games~\cite{BHLR08}.
Our work differs from all the above as none of them capture lookahead dynamics.
In another line of work, convergence of best-response dynamics to (approximate) equilibria and the complexity
of game dynamics and sink equilibria have been
studied~\cite{FPT04,ARV06a,CS07,SkopalikV08,FP08,MS09}, but our paper does not
focus on these types of dynamics or convergence to equilibria.

Motivated by concerns of stability, convergence, and predictability of equilibria
and game dynamics, various equilibrium concepts other than Nash equilibria have been studied
in the economics literature. Among them are correlated equilibria~\cite{A74},
stable equilibria~\cite{KM86}, 
stochastic adjustment models~\cite{KMR93}, 
strategy subsets closed under rational behaviour (CURB set)~\cite{BW90},
iterative elimination of dominated 
strategies, the set of undominated strategies, etc. 
Convergence and strategic stability of equilibria in evolutionary game theory
is also an important subject of study. 
Many other game-theoretic models have been proposed to capture the 
self-interested behaviour of agents.
As well as best-response dynamics, noisy best-response dynamics~\cite{E93,Y93,MontanariS09}, where players occasionally make
mistakes, and simultaneous Nash dynamics~\cite{BerenbrinkFGGHM06},
where all players change their strategies simultaneously, are both well-studied.
In many other models the effect of learning algorithms~\cite{Y04} is examined, 
for example, regret minimisation dynamics~\cite{FV97,HM00,H05,BlumM07,BEL06,BHLR08,EMN09} 
and fictitious play~\cite{Brown51}. 
In most of these studies the most important factor is the 
stability of equilibria, and not measurements of the social value of equilibria.
Furthermore, most of them are motivated by theoretical game theoretic concepts
rather than practical game-playing, and none of the above works consider lookahead search.

\section{Generalised Second-Price Auctions}\label{sec:ad}

For our first example, we apply the lookahead model to generalised second-price (GSP) auctions.
Our main results are that outcomes are provably good when agents use additional foresight; in contrast, myopic
behaviour can produce very poor outcomes. 

The auction set-up is as follows. There are $T$ slots with click-through rates $c_1 > c_2 > ... > c_T > 0$, that is, higher indexed slots have lower 
click-through rates. There are $n$ players bidding for these slots, each with a private valuation $v^i$. Each player $i$ makes a bid $b^i$.
Slots are then allocated via a {\em generalised second price auction}. 
Denote the $j$th highest bid in the descending bid sequence by $b_j$, with corresponding valuation $v_j$.
The $j$th best slot, for $j\leq T$, is assigned to the $j$th highest bidder who is charged a price equal to $b_{j+1}$.
The $T$ highest bidders are called the ``winners''. 
According to the pricing mechanism, if bidder $i$ were to get slot $t$ in the final 
assignment, then he would get utility $u_t^i = (v^i - b_{t+1})c_t$. We denote a player $i$'s utility if he bids $b^i$ by $u^i(b^i)$ 
(the other players bids are implicit inputs for $u^i$).

This auction is used in the context of keyword ad auctions (e.g, Google AdWords) for sponsored search. 
Given the continuous nature of bids in the GSP auction, the best response of each bidder $i$ for any vector of bids by other bidders
corresponds to a range of bid values that will result in the same outcome from $i$'s perspective.
Among these set of bid values, we focus on a specific bid value $b^i$, called the
{\em balanced} bid~\cite{CDE07}. 
The balanced bid $b^i$ is a best-response bid that is as high as possible such that player $i$ cannot be harmed 
by a player with a better slot undercutting him, i.e. bidding just below him. It is easy to calculate that for player $i$ in 
slot $t$, $1 \leq t < T$, the only balanced bid is 
$$b^i = (1-\frac{c_t}{c_{t-1}}) v^i + \frac{c_t}{c_{t-1}} b_{t+1}. $$ 

An important property of balanced bidding is that each ``losing" player $i$ (one not assigned a slot) 
should bid truthfully, that is $b^i = v^i$. To see this add dummy slots with $c_t=0$ if $t>T$.
The player who wins the top slot should also bid truthfully under balanced bidding.
Balanced bidding is the most commonly used bidding strategy~\cite{CDE07, MT10}. 
For some intuition behind this, note that balanced bidding has several 
desirable properties. For a competitive firm, bidding high obviously increases the chance of 
obtaining a good slot. Within a slot this also has the benefit of pushing up the price
a competitor pays without affecting the price paid by the firm.
On the other hand, bidding high increases the upper bound on the price the firm 
may pay, leading to the possibility that the firm may end up paying a high price for one of 
the less desirable slots. Balanced bidding eliminates the possibility 
that a change in bid from a higher bidder can hurt the firm. (Clearly, it is impossible to obtain such a 
guarantee with respect to a lower bidder.) Thus, balanced bidding provides some of the benefits of
high bidding at less risk.
Balanced bidding naturally converges to Nash equilibria unlike
other bidding strategies such as altruistic bidding or competitor busting \cite{CDE07}. 
Moreover, the other bidding strategies would require 
some discretization of players' strategy space in order to analyse the best response dynamics~\cite{CDE07,MT10}.
Consequently, balanced bidding is the most natural strategy choice for our analysis.

For this auction problem, we consider only the leaf model. The leaf model seems more natural than the path model for a single auction
as players are interested in the final allocation output by the auction (there are no intermediary payoffs).
We analyse both worst-case and average-case lookahead; 
depending upon the level of risk-aversion of the agents both cases seem natural in auction settings.

Let player $i$'s lookahead payoff (or utility) at bid $b^i$ with respect to player $j$, denoted by $u^{ij}(b^i)$, be player $i$'s payoff (or utility) 
after player $j$ makes a best-response move.  
In the worst-case lookahead model, we define player $i$'s lookahead payoff for 
a vector $\bar{b}$ of bids as $\Pi_{i,\bar{b}} = \tilde{u}^i(b^i) = \min_{j} u^{ij}(b^i)$. 
In the average-case lookahead model, player $i$'s lookahead payoff $\Pi_{i,\bar{b}}$ for
a bid vector $\bar{b}$ is $\Pi_{i,\bar{b}}=\bar{u}^i(b^i) = \frac{1}{n} \sum_{j} u^{ij}(b^i)$. 
Changing strategy from bid $b^i$ to bid $\bar{b}^i$ is a \textit{lookahead improving} 
move if lookahead utility increases, i.e., $\bar{u}^i (\bar{b}^i) > \tilde{u}^i (b^i)$.
We are at a {\em lookahead equilibrium} if no player has a lookahead improving move.

It is known that the social welfare of Nash equilibria for myopic game playing can be arbitrarily bad~\cite{CDE07} unless
we disallow over-bidding~\cite{LT10}. Here, we  prove the advantage of additional foresight by showing that $2$-lookahead 
equilibria have much better social welfare. In particular, we show that all such equilibria are optimal in the
worst-case lookahead model, and all such equilibria are constant-factor approximate
solutions in the average-case lookahead model.

\subsection{{\sc Worst-Case Lookahead.}}\ \\
Our proof for the worst-case lookahead model can be seen as a generalisation of the proof of ~\cite{BDQ08} for a slightly different model.
We start by proving a useful lemma in this context. 

\begin{lemma} \label{sLem1}
Consider the worst-case lookahead model with the leaf model. 
Label the players so that player $i$ is in slot $i$, and suppose there is a player $t$ such that $v^t < v^{t+1}$. 
Then player $t$ myopically prefers slot $t+1$ to slot $t$.
\end{lemma}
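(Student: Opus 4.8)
The plan is to translate the phrase ``player $t$ myopically prefers slot $t+1$ to slot $t$'' into a concrete comparison of myopic utilities and then exploit the balanced-bid condition of the player currently occupying slot $t+1$. First I would pin down the two relevant second prices. In the current assignment player $t$ sits in slot $t$ and pays the bid directly below him, $b_{t+1}$, so his utility is $(v^t-b_{t+1})c_t$. To obtain slot $t+1$ he lowers his bid to sit just below player $t+1$: the two swap, player $t+1$ moves up into slot $t$, the occupants of slots $t+2,t+3,\dots$ are unchanged, and player $t$ now pays the bid immediately below him, which is $b_{t+2}$. Hence his utility in slot $t+1$ is $(v^t-b_{t+2})c_{t+1}$, and the claim is equivalent to the inequality $(v^t-b_{t+2})c_{t+1}\ge (v^t-b_{t+1})c_t$.

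Next I would extract the key identity from the balanced-bid formula applied to the player in slot $t+1$. Substituting $t\mapsto t+1$ into the displayed balanced-bid expression and clearing denominators gives the indifference relation
\[
(v^{t+1}-b_{t+1})\,c_t = (v^{t+1}-b_{t+2})\,c_{t+1},
\]
that is, player $t+1$ is, by the very construction of his bid, exactly indifferent between remaining in slot $t+1$ at price $b_{t+2}$ and being pushed up into slot $t$ at price (essentially) $b_{t+1}$.

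To finish I would introduce the affine function $f(v)=(v-b_{t+2})c_{t+1}-(v-b_{t+1})c_t$, which measures, as a function of the valuation $v$, how much more a player with that valuation likes slot $t+1$ than slot $t$ at the relevant prices. Expanding gives $f(v)=(c_{t+1}-c_t)\,v+(b_{t+1}c_t-b_{t+2}c_{t+1})$, so $f$ is affine with slope $c_{t+1}-c_t$, which is strictly negative since click-through rates are decreasing ($c_t>c_{t+1}$); thus $f$ is strictly decreasing. The identity above says precisely that $f(v^{t+1})=0$. Since $v^t<v^{t+1}$, monotonicity yields $f(v^t)>f(v^{t+1})=0$, which is exactly the desired inequality, so player $t$ strictly prefers slot $t+1$.

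The only delicate points, which I would handle with care, are the bookkeeping of prices after the swap (making sure the relevant second prices are $b_{t+1}$ and $b_{t+2}$ and that no other bidder's position shifts) and the recognition that the entire statement reduces to combining a single indifference equation with the sign of one slope. Once the problem is cast as ``evaluate a strictly decreasing affine function at an argument smaller than its known root,'' no real computation remains; the substantive content is simply the observation that the balanced bid of player $t+1$ supplies that root. A minor boundary case is when slot $t+1$ is the last genuine slot, which is handled by the usual convention of appending dummy slots with click-through rate $0$ so that the formula and the price $b_{t+2}$ stay well defined.
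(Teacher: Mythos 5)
Your proof is correct and rests on exactly the same ingredients as the paper's: the balanced-bid formula for the player in slot $t+1$ (your indifference identity $(v^{t+1}-b_{t+1})c_t=(v^{t+1}-b_{t+2})c_{t+1}$ is precisely that formula rearranged), combined with $v^t<v^{t+1}$ and $c_{t+1}<c_t$. The only difference is presentational: the paper assumes non-preference and derives a contradiction via a chain of inequalities, while you argue directly by observing that the preference difference $f(v)=(v-b_{t+2})c_{t+1}-(v-b_{t+1})c_t$ is a strictly decreasing affine function with root $v^{t+1}$, evaluated at $v^t<v^{t+1}$ --- the same algebra, cleanly repackaged.
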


\begin{proof}
Suppose not. Then, as player $t$ does not myopically prefer slot $t+1$ we have
$$(v_t - b_{t+1})c_t \geq (v_t - b_{t+2})c_{t+1}$$
By definition, $b_{t+1} = v_{t+1} - \frac{c_{t+1}}{c_t}(v_{t+1} - b_{t+2})$. Plugging this in gives
\begin{eqnarray*}
(v_t - b_{t+2})c_{t+1} \le 
\left(v_t - \frac{c_t - c_{t+1}}{c_t}v_{t+1} - \frac{c_{t+1}}{c_t}b_{t+2}\right)c_t 
<  \left(\frac{c_{t+1}}{c_t}v_{t} - \frac{c_{t+1}}{c_t}b_{t+2}\right)c_t
= (v_t-b_{t+2})c_{t+1}
\end{eqnarray*}
Thus we obtain our desired contradiction. Note that the strict inequality above follows directly from the fact that
$v^t < v^{t+1}$.
\end{proof}

An 
equilibrium is \textit{output truthful} if the slots are assigned to the same bidders as they would be if bidders 
were to bid truthfully. It is easy to verify that an an allocation optimizes solcial welfare if and only if it is output truthful.
Thus to prove $2$-lookahead equilibria are socially optimal it suffices to show they are output truthful.

\begin{theorem}
For GSP auctions, any $2$-lookahead equilibrium gives optimal social welfare
in the worst-case, leaf model.
\end{theorem}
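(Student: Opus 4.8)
The plan is to use the reduction stated just before the theorem: an allocation is socially optimal if and only if it is output truthful, so it suffices to prove that every $2$-lookahead equilibrium is output truthful. I proceed by contradiction. Suppose a $2$-lookahead equilibrium fails to be output truthful and relabel the players so that player $i$ sits in slot $i$. If no two adjacent slots were inverted, the valuations would be sorted in slot order and the allocation would already be output truthful; hence there is an \emph{adjacent} inversion, a player $t$ in slot $t$ and a player $t+1$ in slot $t+1$ with $v^t < v^{t+1}$. It will be useful to choose this inversion so that the slots strictly below it are already correctly ordered, which restricts how lower players can respond.

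By Lemma~\ref{sLem1}, the inversion forces a strict \emph{myopic} preference of player $t$ for the worse slot:
$$(v^t - b_{t+2})\,c_{t+1} \;>\; (v^t - b_{t+1})\,c_t .$$
To contradict the equilibrium hypothesis I exhibit a $2$-lookahead improving move for player $t$: lower his bid just enough to drop into slot $t+1$ (which sends player $t+1$ up into slot $t$). Writing $\tilde u^t(\cdot)=\min_j u^{tj}(\cdot)$ for the worst-case lookahead payoff, I must show $\tilde u^t(\text{deviation}) > \tilde u^t(\text{current})$, which reduces to two estimates: an upper bound on player $t$'s worst-case payoff while in slot $t$ (targeting the myopic slot-$t$ utility $(v^t-b_{t+1})c_t$), and a lower bound on his worst-case payoff after moving to slot $t+1$ (targeting the myopic slot-$(t+1)$ utility $(v^t-b_{t+2})c_{t+1}$). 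The displayed inequality is precisely what separates the two bounds.

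The heart of the argument, and its main obstacle, is the worst-case responder analysis underlying these two estimates; this is exactly where the balanced-bidding structure is exploited and where the proof generalizes that of~\cite{BDQ08}. After player $t$ has moved to slot $t+1$ with a balanced bid, I classify the single best response of an arbitrary opponent $j$. If $j$ lies above player $t$ it can only undercut, and the defining safety property of balanced bidding guarantees this cannot harm player $t$, leaving him with utility at least $(v^t - b_{t+2})c_{t+1}$; if $j$ lies below and does not overtake player $t$, he remains in slot $t+1$ with the same utility. The delicate case is a lower player whose balanced best response overtakes player $t$ and pushes him to slot $t+2$. To rule this out (or to show the resulting payoff still exceeds the slot-$t$ utility) I would invoke the equilibrium conditions of the lower players — none of whom preferred to move up in the original configuration — together with the fact that the chosen inversion has its lower slots correctly ordered, so that player $t$'s small bid decrease creates no profitable overtaking response. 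Establishing that no such harmful overtake exists is the crux; combining it with the balanced-bidding safety guarantee and the analogous (easier) upper bound for the current state yields $\tilde u^t(\text{deviation}) > (v^t-b_{t+1})c_t \ge \tilde u^t(\text{current})$, contradicting the equilibrium and completing the proof.
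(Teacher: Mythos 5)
Your overall frame coincides with the paper's---reduce to output truthfulness, argue by contradiction, use Lemma~\ref{sLem1} for the myopic comparison, exploit the balanced-bidding safety property against undercutting from above, and upper-bound the stay-put lookahead payoff by the myopic payoff---but your choice of deviation (an adjacent swap at an inversion in the middle of the slot order) leaves precisely the step you yourself call ``the crux'' unproved, and the justification you sketch for it is not available. A $2$-lookahead equilibrium does not imply that players are myopically stable: it says no player has a \emph{lookahead}-improving move, so the lower players may perfectly well have myopically improving upward moves in the original configuration. Hence ``none of whom preferred to move up in the original configuration'' is not a consequence of the equilibrium hypothesis; and even if it were, player $t$'s bid decrease changes the bid vector, so the lower players' best responses would have to be re-analysed at the new state anyway. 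Concretely, choosing the inversion so that the slots below it are sorted \emph{among themselves} says nothing about how $v^t$ compares to those players' values; the player in slot $t+2$ may have valuation much larger than $v^t$, and after your swap his single best response can leapfrog player $t$. In the worst-case model the adversary picks exactly this responder, so your claimed lower bound $(v^t-b_{t+2})c_{t+1}$ on the deviation's lookahead payoff fails and the contradiction never materializes. In addition, your reduction to an adjacent inversion among slot-holders only covers misordering \emph{within} the winner set; a non-output-truthful equilibrium can also have correctly sorted winners but the wrong winner set (a loser with higher valuation than some winner), a case your argument never addresses.

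The paper's proof is engineered so that both difficulties disappear. It takes as deviator the \emph{lowest-valued winner} $i$. If some loser has value above $v_i$, that loser bids truthfully under balanced bidding, so player $i$'s utility is negative---an immediate contradiction that disposes of the wrong-winner-set case. Otherwise $v_i$ is exactly the $T$-th highest valuation, and player $i$ is moved to the \emph{bottom} winning slot $T$, the myopic gain being obtained by chaining Lemma~\ref{sLem1} down the column (with an induction on slots $1,\dots,T-1$ if $i$ already sits in slot $T$). In slot $T$ every player below him is a loser whose valuation is below $v_i$, so no responder from below can profitably enter a winning slot, while balanced bidding protects him from above. The ``no harmful response'' step is thus trivial by construction rather than being the crux---which is exactly what a mid-column swap cannot achieve.
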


\begin{proof}
We proceed by contradiction. Consider a non-output-truthful $2$-lookahead equilibrium. 
Again, label the players so that the player $i$ is in slot $i$.
Amongst all the winning players, take the one with the lowest valuation, $v_i$. First suppose that $v_i$ 
is not amongst the $T$ highest valuations. Then, there is a losing player with a higher value than $v_i$. 
But this player is bidding his value, as a result of balanced bidding. Consequently, player $i$'s utility must be negative,
a contradiction. 

Thus, we may assume that $v_i$ is amongst the $T$ highest valuations; specifically it must have exactly the
$T$th highest valuation.
We will show that player $i$ moving into slot $T$ is a lookahead improving move. Notice that the 
lookahead value for player $i$ staying in slot $i$ is at most the myopic value of staying in that slot.
This follows as the choice of a player two slots below $i$ cannot improve the utility of player $i$ (neither in terms of price
nor slot position), but only could make it worse. 
Hence, it suffices to show that the lookahead value of changing slots is better than the myopic value of staying in slot $i$. 

By several applications of Lemma \ref{sLem1},
we see that player $i$ myopically prefers slot $T$ to slot $i$. 
However, in moving to slot $T$, player $i$ will still make a balanced bid. Thus, no other winning player 
may reduce $i$'s utility by undercutting him. Also, no losing player $j$ wants to move to a winning slot as they can 
only be left with negative utility - since $j$ cannot then be amongst the $T$ highest valuations. 
So moving to slot $T$ is a lookahead improving move for player~$i$.

If player $i$ were originally in slot $T$, then the entire argument can be applied with regards to slots $1$ to $T-1$.
Inductively, we then conclude that in any non-output-truthful equilibrium, there is a lookahead 
improving move, which is a contradiction. This gives us the desired result.
\end{proof}

\subsection{{\sc Average Case Lookahead.}}\ \\
Next, we consider the average-case lookahead model. and show that the above theorem does
not hold for this case. 

\begin{theorem}\label{sthm:avg-badexample}
In GSP auctions, there exist $2$-lookahead equilibria that are not output-truthful in the average-case, leaf model.
\end{theorem}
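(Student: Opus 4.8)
The plan is to prove this existential statement by exhibiting a single explicit instance — a choice of the number of slots $T$, the click-through rates $c_1 > \cdots > c_T$, and the players' valuations $v^i$ — together with a non-output-truthful bid profile $\bar b$, and then verifying directly that $\bar b$ is an average-case $2$-lookahead equilibrium. Since the claim is purely existential, no generality is required: the entire content is (a) displaying the instance and (b) checking that no player has an average-case lookahead improving move.

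The guiding principle behind the construction is to exploit exactly the place where the worst-case argument of the previous theorem breaks down. There, the adversary is free to select, among all possible next movers, the response most harmful to the deviating player; this forces the lowest-valued winner to always have a profitable correcting move, since by Lemma \ref{sLem1} she myopically prefers a lower slot and balanced bidding shields her after the move. In the average-case model the harmful responses are \emph{diluted}: the next mover is chosen uniformly at random, so a correcting move that is strictly improving against the single worst responder can become non-improving once it is averaged against the responders who either leave the configuration unchanged or actively \emph{undo} the move (for instance by reclaiming the slot the mover just vacated, or by over-pricing it). The instance must therefore be tuned so that, for every player simultaneously, the averaged lookahead payoff of her current bid dominates that of each alternative.

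Concretely I would proceed in four steps. First, fix the parameters so that the profile places a lower-valued bidder above a higher-valued one — the minimal source of non-truthfulness, guaranteed by the adjacent inversion (value descent) that any unsorted profile must contain, which is precisely the premise of Lemma \ref{sLem1} — while keeping the bid order consistent with balanced bidding; recall that the balanced-bid formula $b^i = (1-\tfrac{c_t}{c_{t-1}})v^i + \tfrac{c_t}{c_{t-1}} b_{t+1}$ ties ``content in the lower slot'' to bid-order consistency, which constrains the admissible range of the $c_t$'s and $v^i$'s. Second, write out all the balanced bids explicitly and confirm the allocation is not output-truthful. Third, for each player and each candidate deviation (which in the leaf model amounts to targeting a different slot), enumerate the myopic best response of every possible next mover $j$ — the second mover is myopic since $T'_j \subseteq T_i$ — and compute the average lookahead utility $\bar u^i = \tfrac1n\sum_j u^{ij}$. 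Fourth, verify that the current bid maximizes $\bar u^i$ for every player, so that no lookahead improving move exists.

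The main obstacle lies in the third and fourth steps, and it is more delicate than it first appears. Balanced bidding protects a player who moves into a lower slot, and losing bidders bid truthfully and so can never profitably outbid an incumbent's balanced bid; consequently the naive single-swap instances \emph{fail}, because after the correcting downward move almost every responder leaves the mover safely in her preferred slot, and the averaged payoff of moving still beats that of staying. The instance must be engineered — with enough bidders, and with click-through ratios close enough to $1$ to make the contested slots nearly interchangeable — so that a sufficiently large fraction of the next movers' best responses degrade the correcting player's outcome enough to flip the average in favor of staying put. Balancing these competing inequalities for all players at once, while preserving non-output-truthfulness, is the crux of the argument.
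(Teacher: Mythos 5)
Your overall approach is the right one, and it is exactly the paper's: since the statement is purely existential, the proof consists of exhibiting one explicit instance together with a non-output-truthful balanced-bid profile, and then verifying that no player has an average-case lookahead improving move. Your intuition for why such an instance can exist --- uniform averaging over next movers dilutes the single harmful responder that powers the worst-case argument, so the correcting move of the lowest-valued winner can fail to be improving on average --- is also sound.

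But the proposal stops precisely where the proof has to begin: you never exhibit the instance. Your four steps are a plan, not an argument, and you yourself flag ``balancing these competing inequalities for all players at once, while preserving non-output-truthfulness'' as the crux --- and that crux is left unresolved. For a theorem of the form ``there exist $2$-lookahead equilibria that are not output-truthful,'' an accurate description of the properties a witness must have carries no probative weight; the entire mathematical content is the witness and its verification. The paper supplies it: $n = T = 4$, click-through rates $c_1 = 35$, $c_2 = 26$, $c_3 = 25$, $c_4 = 20$, valuations $v_1 = 82$, $v_2 = 83$, $v_3 = 100$, $v_4 = 93$ (indexed by slot), with each bidder $i$ placing the balanced bid for slot $i$; one then checks directly that this profile is an equilibrium although the allocation inverts the value order. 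Note also that the paper's witness cuts against one of your engineering prescriptions: you ask for ``enough bidders,'' but with uniform averaging each adverse responder carries weight $1/n$, so a \emph{small} $n$ makes it easier, not harder, for a minority of bad responses to flip the average --- which is presumably why four bidders suffice.
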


\begin{proof}
Consider the following example with $n = T = 4$. Let the click-through rates be $c_1 = 35, c_2 = 26, c_3 = 25,$ 
and $c_4 = 20$. Let the valuations be $v_1 = 82, v_2 = 83, v_3 = 100, v_4 = 93$. Starting with the highest slot 
and working to the lowest, let bidder $i$ bid the balanced bid for slot $i$. It can be verified that this turns out to be a 
non-output-truthful equilibria.
\end{proof}

Despite this negative result, $2$-lookahead equilibria cannot have arbitrarily bad social welfare. 

\begin{theorem}\label{sthm:avg-positive}
In GSP auctions, the coordination ratio of $2$-lookahead equilibria is constant in the average-case, leaf model.
\end{theorem}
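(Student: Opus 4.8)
The plan is to run a \emph{smoothness-style deviation argument}, adapted from the known constant price-of-anarchy bounds for GSP under conservative bidding, while controlling the extra distortion introduced by averaging over one opponent's best response. Throughout write $\opt$ for the optimal (output-truthful) social welfare and $\mathrm{SW}$ for the social welfare of the given $2$-lookahead equilibrium, and recall $\mathrm{SW}=\sum_i u^i(b^i)+\mathrm{Rev}$ where $\mathrm{Rev}$ is the revenue. First I would establish \emph{conservative bidding}: at a balanced-bidding equilibrium every player bids $b^i\le v^i$. Losers and the top winner bid truthfully, and a winner in slot $t$ bids the convex combination $(1-c_t/c_{t-1})v^i+(c_t/c_{t-1})b_{t+1}$, which is at most $v^i$ since $b_{t+1}\le v^i$ (otherwise $i$'s utility would be negative). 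This guarantees that in every configuration reached by one further best response all utilities are nonnegative and no price exceeds its value.

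Next, for each player $i$ that is a winner in the optimal allocation, let $o(i)$ denote $i$'s optimal slot and fix a deviation bid $\tilde b^i$ (a suitable fraction of $v^i$) that targets $o(i)$. Conservative bidding implies that only players with value exceeding $v^i$ can outbid the deviation, so it secures $i$ a slot at least as good as $o(i)$, and the standard semismoothness estimate gives a \emph{myopic} lower bound $u^i(\tilde b^i)\ge \lambda\, v^i c_{o(i)}-\mu\cdot(\text{current bid in slot }o(i))$ for absolute constants $\lambda,\mu$. The subtracted terms sum over the optimal winners to at most $\mathrm{Rev}\le \mathrm{SW}$, while $\sum_i v^i c_{o(i)}=\opt$.

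The new ingredient is to pass between these \emph{myopic} quantities and the \emph{average-case lookahead} utilities that appear in the equilibrium condition $\bar u^i(b^i)\ge\bar u^i(\tilde b^i)$. Two structural facts drive this. A single best response moves any other player by at most one slot, so each term $u^{ij}(\cdot)$ differs from the corresponding myopic utility only through a one-slot shift of $i$ and/or a single change to $i$'s price; with nonnegativity this yields, after the deviation, $\bar u^i(\tilde b^i)\ge \tfrac{n-A_i}{n}\,u^i(\tilde b^i)$, where $A_i$ counts the opponents whose best response would displace $i$. On the equilibrium side, the total utility in any configuration is at most its welfare, giving the clean bound $\sum_i\bar u^i(b^i)=\tfrac1n\sum_j\sum_i u^{ij}(b^i)\le \tfrac1n\sum_j \mathrm{SW}^{(j)}$, where $\mathrm{SW}^{(j)}\le\opt$ is the welfare after $j$ moves. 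I would then sum the equilibrium inequalities over the $T$ optimal winners, substitute the deviation lower bound and this upper bound, use $\mathrm{Rev}\le\mathrm{SW}$, and solve for $\mathrm{SW}/\opt$.

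The delicate point at which the argument must be made to close — and what I expect to be the \textbf{main obstacle} — is that the clean bound only gives $\sum_i\bar u^i(b^i)\le\opt$, which combined with the deviation estimate merely yields $\opt\ge\lambda\opt-\mu\,\mathrm{SW}$ and is vacuous for $\lambda<1$. To get a genuine lower bound on $\mathrm{SW}$ I must instead control the \emph{averaging gain} $\tfrac1n\sum_j(\mathrm{SW}^{(j)}-\mathrm{SW})^{+}$ and the losses $A_i/n$, bounding them by an absolute constant times $\opt$. Here I would exploit that $T\le n$, so that per-winner $O(1/n)$ corrections, weighted by $v^i c_{o(i)}$ and using $\opt\ge c_1 v_1$, aggregate to $O(\opt)$; that each displacement costs at most one slot and leaves nonnegative utility; and, crucially, that a persistently large averaging gain would itself exhibit a lookahead-improving deviation, contradicting equilibrium. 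Making the constants in this last reconciliation explicit is the hard part; the remainder is the standard GSP smoothness calculation.
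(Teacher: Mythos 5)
Your proposal is not yet a proof: it assembles the standard semi-smoothness machinery for GSP and then, at the decisive step, records an acknowledged hole (``a persistently large averaging gain would itself exhibit a lookahead-improving deviation \dots making the constants explicit is the hard part''). That step \emph{is} the theorem. The two quantities you must reconcile --- the myopic deviation gain $u^i(\tilde b^i)$ and the averaged lookahead utilities $\bar u^i$ in the equilibrium condition --- can differ by an unbounded factor precisely when consecutive click-through rates are far apart: if $c_{o(i)+1}\ll c_{o(i)}$, a single opponent jumping above the deviator wipes out essentially all of $u^i(\tilde b^i)$, so your bound $\bar u^i(\tilde b^i)\ge \frac{n-A_i}{n}\,u^i(\tilde b^i)$ is vacuous whenever $A_i=\Theta(n)$, which nothing rules out. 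On the other side, your bound $\sum_i\bar u^i(b^i)\le\frac1n\sum_j \mathrm{SW}^{(j)}\le\opt$ cannot be tightened to $O(\mathrm{SW})$ without new structural information, since bounding how much welfare a single best response can gain over $\mathrm{SW}$ is essentially equivalent to the statement being proved (assuming it would be circular). So both sides of the smoothness inequality leak, and the proposal contains no mechanism that controls the leakage.

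The paper's proof supplies exactly the structural fact your plan lacks, and it is not a smoothness argument at all --- it never sums deviation inequalities over players. It defines the set $I$ of slots $i$ whose occupant has value below $\alpha v_{i^*}$ while the optimal occupant $i^*$ sits in a slot of CTR below $\beta c_i$; slots outside $I$ already yield a constant fraction of their share of $\opt$. The key claim is that within $I$ the CTRs decay geometrically, $c_{i+1}\le c_i/m$: otherwise player $i^*$ moving into slot $i$ is lookahead improving, because the one-slot-displacement guarantee $c_{i+1}(v_{i^*}-b_i)$ remains comparable to $c_iv_{i^*}$ exactly under the contradiction hypothesis $c_{i+1}>c_i/m$, while staying put is worth at most $c_{i^*-1}v_{i^*}<\frac{1}{1-\alpha}c_{i^*}v_{i^*}$. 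Geometric decay makes the contribution of $I$ to $\opt$ telescope into a constant times its top term, and a further case analysis around slot $1$ (the set $\Gamma$ and constants $\gamma,\delta,\epsilon$) either extracts a constant fraction of that term from the equilibrium welfare or exhibits another explicit lookahead-improving move. To rescue your approach you would need to prove an analogue of this geometric-decay claim for the slots where your per-player reconciliation fails; that is the missing idea, not a matter of making constants explicit.
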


\begin{proof}
Suppose that we are at an equilibrium. Let $v_{i^*}$ be the $i^{th}$ highest valuation, let player $i^*$ denote the corresponding player, 
let $b_{i^*}$ denote their bid, and $c_{i^*}$ be the click through rate of the slot they currently occupy. We recall that $v_i$ denotes the 
player in slot $i$ and it has click through rate $c_i$ and bid $b_i$.
The social utility of a set $A$ of players is $\sum_{i \in A} {v_ic_i}$.
Thus, by the above definitions, the optimal social utility is $\sum_i {v_{i^*}c_i}$.

Now, choose $\alpha, \beta < 1$ such that $(1-\alpha)^2 > m \beta$. 
Let $I$ be the set of indices $i$ that satisfy both $v_{i} < \alpha v_{i^*}$ and $c_{i^*} < \beta c_{i}$.
Note that for all $i \notin I$ the pair of players $v_i, v_{i^*}$ contribute at least $\min\{\alpha,\beta\}v_{i^*}c_i$ to OPT. So if $I$ is empty, then 
we have achieved a constant coordination ratio. We may thus suppose $I$ is not empty and choose $i \in I$.

Consider $c_{i^*-1}$. As we assume ``balanced" bidding, 
$$b_{i^*} \ge (1-\frac{c_{i^*}}{c_{i^*-1}})v_{i^*}$$
Since $b_{i^*} < b_i < v_i < \alpha v_{i^*}$ by assumption, we have $c_{i^*-1} < \frac{1}{1-\alpha}c_{i^*}$.
Choose $m > 1$. We first prove the following claim.
\begin{claim}\label{scl:I}
For all $i \in I$, we have $c_{i+1} \le \frac{c_i}{m}$.
\end{claim} 
\begin{proof}
Suppose $c_{i+1} > \frac{c_i}{m}$, for some $i\in I$. We will show that player $i^*$ moving into slot $i$ is then lookahead improving. Consider his lookahead 
utility for staying put. Ignoring a repeat move for player $i^*$, which occurs with probability $\frac{1}{n}$, player $i^*$'s utility  in every other 
circumstance is at most $c_{i^*-1}v_{i^*}$, as other players can improve his position by at most one. On the other hand, if player $i^*$ moves 
into slot $i$ then his lookahead utility is at least $c_{i+1}(v_{i^*} - b_{i})$; he wins at least slot $i+1$ and pays at most his bid. 
If player $i$ is chosen to repeat his move then his utility is the same for both cases (as he will then simply 
play a best response move). Thus, it is enough for us to show that
$$c_{i+1}(v_{i^*} - b_{i}) > c_{i^*-1}v_{i^*}$$

However $b_i < v_i < \alpha v_{i^*}$ and putting this together with the above inequalities gives
\begin{eqnarray*}
c_{i+1}(v_{i^*} - b_{i})  &>&  \frac{c_i}{m}(1 - \alpha)v_{i^*} \\
&\ge&  \frac{\beta}{1-\alpha}c_i v_{i^*} \\
&\ge&  \frac{\beta}{1-\alpha}c_{i}v_{i^*} \\
&>&  \frac{1}{1-\alpha}c_{i^*}v_{i^*} \\
&>&  c_{i^*-1}v_{i^*} 
\end{eqnarray*}
We are now done, by our choice of $\alpha$ and $\beta$, and have shown that player $i^*$ moving into slot $i$ is a lookahead improving move. 
This contradicts the fact we are at an equilibria.
\end{proof}

Thus we have established that for all $i \in I$, $c_{i+1} < \frac{c_i}{m}$. Thus, we can bound the optimal social utility contributed by the slots $i \in I$ 
by $\frac{m}{m-1}c_{i_0}v_{i_0*}$ where $i_0 = \min_{i \in I} i$. 

Now if $1 \notin I$ then we have achieved our constant coordination ratio since then either $c_1v_1 > \alpha c_1v_{1^*}$ 
or $c_{1^*}v_{1^*} \ge \beta c_1 v_{1^*}$. Hence, we are guaranteed at least $\min\{\alpha,\beta\}  c_1v_{1^*} \ge \min\{\alpha,\beta\}c_{i_0}v_{i_0*}$,
that is, a least a constant factor of
the social utility from all the slots in $I$ in the optimal allocation. So we suppose $1 \in I$.

Choose $\alpha_1 = \frac{m}{m-1} \alpha$ and consider the player currently in slot $2$. By this choice of $\alpha_1$, 
we ensure that this player does not have value more than $\alpha_1 v_{1^*}$. To see this, recall the player is bidding in a balanced manner 
and so, by Claim \ref{scl:I}, his bid $b_2$ satisfies
$$v_2\ge b_2 \ge (1-\frac{c_2}{c_1})v_2 \ge (1-\frac{1}{m})v_2$$ 
On the other hand, as $1\in I$ we have
$$b_1 = v_1 \le \alpha v_{1^*}$$
Thus, we must have $v_2 \le \frac{m}{m-1} \alpha v_{1^*}= \alpha_1 v_{1^*}$ or the second player would win the first slot.

Now let $\Gamma$ be the set of players with value at least $\alpha_1 v_{1^*}$. Choose some constant $\gamma$. If $|\Gamma| < \gamma n$, then 
player $1^*$'s lookahead utility for moving into slot one is at least $(1-\gamma)(1-\alpha_1)v_{1^*}c_1$. If player $1^*$ stays put, ignoring a repeat 
move for player $1^*$, which occurs with probability $\frac{1}{n}$, player $i^*$'s utility in every other 
circumstance is at most 
$$c_{1^*-1}v_{1^*} < \frac{1}{1-\alpha}c_{1^*}v_{1^*} <  \frac{\beta}{1-\alpha}c_{1}v_{1^*}$$

Since player $1^*$'s utility is the same for both cases when a repeated move occurs and since we can choose $\beta$ sufficiently 
small (i.e, $\beta < (1-\gamma)(1-\alpha)(1-\alpha_1)$), player $1^*$ 
will improve by moving into slot $1$ in this case, contradicting the fact that we are at an equilibrium. 

Thus, we may suppose $|\Gamma| > \gamma n$. Let $i_1 = \max_{i \in \Gamma} i$. Then the players in $\Gamma$ contribute at 
least $\gamma n \alpha_1 v_{1^*}c_{i_1}$ to the social utility. Take a constant $\delta$ 
and suppose that $c_{i_1} \geq \delta \frac{c_1}{n}$. Then the players in 
$\Gamma$ would contribute at least $\gamma \delta \alpha_1 c_1 v_{1^*}$.
Again, this a constant fraction of 
social utility that is contributed in the optimal allocation by player $1^*$ which, in turn, is a constant factor of the optimal social utility of 
the slots in $I$. Thus, we would achieve a constant factor of the optimal social utility. 

So we may assume $c_{i_1} < \delta \frac{c_1}{n}$.  Consider player $i_1$. 
His lookahead utility for staying in place, ignoring the case 
of a repeated move, is at most 
$$
c_{i_1-1}v_{i_1} 
\le \frac{1}{1-\alpha}c_{i_1}v_{i_1}  
\le \frac{1}{1-\alpha}\frac{\delta}{n} c_1 v_{i_1}  
\le \frac{1}{1-\alpha}\frac{\delta}{n} c_1 v_{i^*}  
$$

We may assume that player $v_1\le (1-\epsilon)\alpha_1v_{1^*}$, for some constant $\epsilon$, otherwise we are done. 
Therefore, if player $i_1$ moves 
to slot $1$ then he will earn at least $\epsilon c_1 v_{1^*}$ provided that player $1$
makes the next move. This occurs with probability $1/n$, and so his total lookahead utility, ignoring a repeated move, is at least $\frac{\epsilon}{n}c_1 v_{1^*}$. 
Thus by choosing $\delta \le (1-\alpha)\epsilon$, it follows that the coordination ratio is constant in the 
average case model.  \end{proof}

\section{Industrial Organisation:  Cournot Competition}\label{sec:Cournot}

Next we consider the classical
game theoretic topic of duopolistic competition. Economists have considered a number of alternative models for market 
competition \cite{Tir88}, prominent amongst them is the Cournot model \cite{Cournot}. 
Our main result here is that the social surplus increases when firms are not myopic; surprisingly,
social welfare is actually maximized when firms use $2$-lookahead. 

The Cournot model assumes players sell identical, 
nondifferentiated goods, and studies competition in terms of quantity (rather than price). 
Each player takes turns choosing some quantity of good to produce, $q_i$, and pays some marginal cost to produce it, $c$. The price 
for the good is then set as a function of the quantities produced by both players, $P(q_i + q_j) = (a-q_i-q_j)$, for some constant $a>c$. On turn $l$, each player $i$ makes profit:
$\Pi^l_i(q_i,q_j) = q_i (a-q_i-q_j-c)$.
In this form, the model then has only has one equilibrium, called the {\em Cournot equilibrium}, where $q_i = (a - c)/3$ for each player. 
At equilibrium, each player make a profit of $\Pi^i(q_i,q_j) = q_i(1-2q_i)$. The consumer surplus is $2q_i^2$ and the social surplus is 
then $2q_i(1-q_i)$.

\subsection{{\sc Production under Lookahead Search.}}\ \\
We analyse this game when players apply $k$-lookahead search. 
In industrial settings it is natural to assume that payoffs are collected over time (as in a repeated game); thus, we focus upon the path model.
We define this model inductively. In a $k$-step lookahead path 
model, each player $i$'s utility is the sum of his utilities in the current turn and the $k-1$ subsequent turns. He models the quantities chosen in the 
subsequent turns as though the player acting during those turns were playing the game with a smaller lookahead. More specifically, he assumes 
that the player acting in the $t$'th subsequent turn chooses their quantity to maximise their utility under a $k-t$ lookahead model. 
In order to rewrite this rigorously, let $\pi^i_l$ be the contribution to his utility that player $i$ expects on the $l$th subsequent turn (and $\pi^i_0$ be the 
contribution to his utility that player $i$ expects on his current turn), let $\pi^j_l$ be the contribution to player $j$'s utility that player $i$ expects on the $l$'th 
subsequent turn, and let $q^i_l$ (respectively, $q^j_l$) be the quantity that player $i$ expects to choose (respectively, expects his opponent to choose) under this model.

Then in the path model, player $i$'s expected utility function is $\Pi^i = \sum_{t=0}^{k-1} \pi^i_t$. Player $j$'s expected utility function on player $i$'s turn 
is $\Pi^j = \sum_{t=0}^{k-1} \pi^j_t$. Our aim now is to determine the quantities that player $i$ expects to be chosen by both players in the 
subsequent turns and, thereby, determine the quantity he chooses this turn and the utility he expects to garner. To facilitate the discussion, it should be 
noted that unless noted otherwise, any reference to a ``turn'' refers to a turn during player $i$'s calculation and not an actual game turn.

To simplify our analysis, we will define $q_l$ to be the quantity chosen on turn $l$ by whichever player is acting and $\Pi_l$ to be the expected 
utility that that player garners from turn $l$ to turn $k$. So $\Pi_0 = \Pi^i$, $\Pi_1 = \sum_{t=1}^{k-1} \pi^j_t$, etc. We define $\overline{\Pi}_l$ to be 
the utility garnered from turn $l$ to turn $k$ by the player who does not act during turn $l$. So $\overline{\Pi}_0 = \Pi^j$, $\overline{\Pi}_1 = \sum_{t=1}^{k-1} \pi^i_t$, etc. 
It is clear that on each turn $l$, the active player is trying to maximise $\Pi_l$.

We are now ready to compute these quantities and utilities recursively. We may assume that $a=1$ and $c=0$. By our definition above, we 
have that $\Pi_k = q_k(1-q_k-q_{k-1})$ 
and $\overline{\Pi}_k = q_{k-1}(1-q_k-q_{k-1})$. Our definition also gives us the recursive formula for $l<k$ 
that $\Pi_l = q_l(1-q_l-q_{l-1}) + \overline{\Pi}_{l+1}$ and $\overline{\Pi}_l = q_{l-1}(1-q_l-q_{l-1}) + \Pi_{l+1}$. 
Note that in each of these formulas, $\Pi_l$ and $\overline{\Pi}_l$ are each functions of $q_t$ for $t \geq l$; $q_{l-1}$ is in fact 
fixed on the previous turn and is, therefore, not a variable in $\Pi_l$. 
It is now possible to calculate $q_l$ recursively.
\begin{lemma} The form of $q_l$ is $\beta_l - \alpha_l q_{l-1}$, where $\beta_k = \alpha_k = \beta_{k-1} = \frac{1}{2}, \alpha_{k-1} = \frac{1}{3}$ 
and, for $l<k-1$,
$$
\beta_l =  \frac{2-\beta_{l+1} + \alpha_{l+1}\beta_{l+2} - \alpha_{l+1}\alpha_{l+2}\beta_{l+1}}{4-2\alpha_{l+1} - \alpha_{l+1}^2\alpha_{l+2}}, \ \ 
\alpha_l =  \frac{1}{4-2\alpha_{l+1}-\alpha_{l+1}^2\alpha_{l+2}}
$$
\end{lemma}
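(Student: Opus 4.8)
The plan is to prove the lemma by backward induction on the turn index $l$, establishing the stated closed form $q_l = \beta_l - \alpha_l q_{l-1}$ together with the recurrences for $\beta_l$ and $\alpha_l$. The base cases are direct computations. For the final turn $l=k$, the active player maximises $\Pi_k = q_k(1-q_k-q_{k-1})$ over $q_k$; setting $\partial\Pi_k/\partial q_k = 0$ gives $q_k = \tfrac{1}{2}(1-q_{k-1})$, so $\beta_k = \alpha_k = \tfrac{1}{2}$. For $l=k-1$, I would substitute the optimal $q_k$ back into $\Pi_{k-1} = q_{k-1}(1-q_{k-1}-q_{k-2}) + \overline{\Pi}_k$ (where $\overline{\Pi}_k = q_{k-1}(1-q_k-q_{k-1})$ uses the already-determined $q_k$), differentiate in $q_{k-1}$, and solve to get $\beta_{k-1} = \tfrac{1}{2}$, $\alpha_{k-1} = \tfrac{1}{3}$.

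\emph{The inductive step is the heart of the argument.} Assume that for all turns $t > l$ (with $t < k$) the optimal quantity has the form $q_t = \beta_t - \alpha_t q_{t-1}$, a linear function of the quantity chosen on the immediately preceding turn. The active player at turn $l$ maximises
$$
\Pi_l = q_l(1-q_l-q_{l-1}) + \overline{\Pi}_{l+1},
$$
so I must first express $\overline{\Pi}_{l+1}$ as a function of $q_l$ (the variable being chosen) using the induction hypothesis. The key structural fact to exploit is that each downstream quantity depends \emph{only} on the one immediately before it, so by chaining $q_{l+1} = \beta_{l+1} - \alpha_{l+1} q_l$ and $q_{l+2} = \beta_{l+2} - \alpha_{l+2} q_{l+1}$ I can write $q_{l+1}$ and $q_{l+2}$ as affine functions of $q_l$, and then expand $\overline{\Pi}_{l+1} = q_l(1-q_{l+1}-q_l) + \Pi_{l+2}$. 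After substituting and collecting terms, $\Pi_l$ becomes a quadratic in $q_l$; differentiating and solving the first-order condition should yield a linear expression in $q_{l-1}$, and matching coefficients against $\beta_l - \alpha_l q_{l-1}$ should produce exactly the stated recurrences.

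\emph{The main obstacle} will be correctly tracking which downstream quantities genuinely depend on $q_l$ and to what depth, so that the coefficients in the quadratic are assembled accurately. The appearance of $\beta_{l+2}$, $\alpha_{l+2}$ and the product $\alpha_{l+1}^2\alpha_{l+2}$ in the denominators signals that the optimisation at turn $l$ reaches two turns forward before the dependence on $q_l$ is severed (because $q_{l+2}$ depends on $q_{l+1}$, which depends on $q_l$, but $q_{l+3}$'s dependence enters only through $q_{l+2}$ and is already absorbed into $\Pi_{l+2}$). I would therefore need a clean bookkeeping lemma, or an explicit statement of the dependence structure, asserting that $\partial \Pi_{l+2}/\partial q_l$ contributes only through the already-optimised form and thus can be treated via the envelope theorem (since $q_{l+2}, q_{l+3}, \dots$ are each chosen optimally, their derivatives with respect to $q_l$ multiply terms that vanish at the optimum). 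Invoking the envelope principle at each stage is the cleanest way to avoid an unbounded cascade of chain-rule terms; the residual first-order condition then involves only the direct dependence of the turn-$l$, turn-$(l+1)$, and turn-$(l+2)$ profit contributions on $q_l$, which reproduces the three-term structure visible in the denominator $4 - 2\alpha_{l+1} - \alpha_{l+1}^2\alpha_{l+2}$.
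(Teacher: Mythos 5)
Your plan is correct and follows essentially the same route as the paper's proof: backward induction with the base cases $l=k,k-1$ computed directly, and the inductive step via the first-order condition in $q_L$, where the cascade of chain-rule terms is truncated exactly as you propose—the paper notes that the term $\frac{\partial \Pi_{L+2}}{\partial q_{L+2}}\frac{\partial q_{L+2}}{\partial q_L}$ vanishes because $q_{L+2}$ is chosen optimally (your envelope argument), leaving only the contributions you identify, after which substituting $q_{L+1}=\beta_{L+1}-\alpha_{L+1}q_L$ and $q_{L+2}=\beta_{L+2}-\alpha_{L+2}q_{L+1}$ yields the stated recurrences.
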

\begin{proof}
We proceed by inducting down from $q_k$.
Consider $q_k$ which is the active player's choice on the final turn. As it is the final turn, he is acting myopically and so will 
choose $q_k$ so as to maximise $\Pi_k = q_k(1-q_k-q_{k-1})$. This parobala as a function of $q_k$ is maximised when $q_k = \frac{1-q_{k-1}}{2}$. 
Doing a similar calculation for $\Pi_{k-1} = q_{k-1}(1-q_{k-1}-q_{k-2})+\overline{\Pi}_{k}$ gives us the desired values for $\beta_{k-1}$ and $\alpha_{k-1}$.
We now assume the lemma for all $l > L$ and try to prove it for $q_L$. Recall the recursive formula $\Pi_L = q_L(1-q_L-q_{L-1}) + \overline{\Pi}_{L+1}$. Taking 
the derivative of this with respect to $q_L$  and setting it all equal to zero gives us 
\begin{eqnarray*}
0 &=& (1-2q_L - q_{L-1}) + (1-2q_L - q_{L+1}) 
  - \frac{\partial q_{L+1}}{ \partial q_L} q_L - \frac{\partial q_{L+1}}{\partial q_L} q_{L+2} + \frac{\partial \Pi_{L+2}}{\partial q_{L+2}} \frac{\partial q_{L+2}}{\partial q_L}
\end{eqnarray*}
The last term of the above sum is zero, since $q_{L+2}$ is chosen so that $\frac{\partial \Pi_{L+2}}{q_{L+2}} = 0$. Thus, if 
we plug in the inductive hypothesis into the above equation and simplify, we get
\begin{eqnarray*}
2 &-& \beta_{L+1} + \alpha_{L+1} \beta_{L+2} - \alpha_{L+1} \beta_{L+2} - \alpha_{L+1}\alpha_{L+2}\beta_{L+1} 
= (4-2\alpha_{L+1} - \alpha^2_{L+1}\alpha_{L+2})q_L - q_{L-1}
\end{eqnarray*}
This gives us the desired result.
\end{proof}
Our goal is now to calculate $q_0$ as this will tell us the quantity that player $i$ actually chooses on his turn. From the above lemma, 
we can calculate $q_0$ if we can determine $\alpha_0$ and $\beta_0$. Using numerical methods on the above recursive formula, 
we see that as $k \rightarrow \infty$, $\alpha_0$ decreases towards a limit of $0.2955977\ldots$ and $\beta_0$ approaches a limit 
of $0.4790699\ldots$. These values also converge quite quickly; they both converge to within $0.0001$ of the limiting value for $k \geq 10$. 
Thus, at a lookahead equilibrium, player $i$ will choose $q_i \approx. 0.4790699 - 0.2955977q_j$ and player $j$, symmetrically, 
will choose $q_j \approx 0.4790699 - 0.2955977q_i$. So each player will choose a quantity $q \approx 0.369767$. which is
more than in the myopic equilibrium. Indeed, it is easy to show that for every $k \geq 2$, each player will produce more 
than the myopic equilibrium. This is illustrated in Figure \ref{sfig1}. Observe the quantity produced does not change monotonically 
with the length of foresight $k$, but it does increase significantly if non-myopic lookahead is applied at all. 
Consequently, in the path model looking ahead is better for society overall but worse for each 
individual firm's profitability (as the increase in sales is outweighed by the consequent reduction in price).
\begin{figure}[htb!]
\includegraphics[scale=0.75]{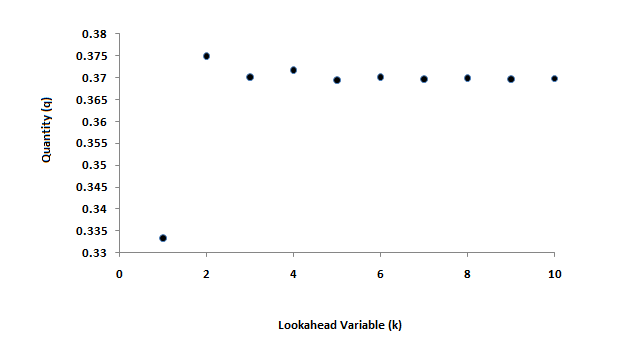}
\caption{How output varies with foresight k}
\label{sfig1}
\end{figure}

\begin{theorem} For Cournot games under the path model, output at a $k$-lookahead equilibrium peaks at $k=2$ with 
output $12.5\%$ larger than at a myopic equilibrium $(k=1)$. As foresight increases, output is $10.9\%$ larger in the limit.
The associated rises in social surplus are $5.5\%$ and $4.9\%$, respectively,
\end{theorem}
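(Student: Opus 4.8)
The plan is to leverage the Lemma, which already reduces each firm's $k$-lookahead best response to the linear form $q_0 = \beta_0 - \alpha_0 q_{-1}$, and convert this into a closed form for equilibrium output. First I would observe that the game is symmetric and the best-response map is the line $q_i = \beta_0 - \alpha_0 q_j$ with slope $-\alpha_0 \in (-1,0)$; hence the two firms' best-response lines meet in the unique point $q_i = q_j = q^{(k)} := \beta_0/(1+\alpha_0)$. This simultaneously gives existence, uniqueness, and symmetry of the $k$-lookahead equilibrium and expresses its per-firm output through $\alpha_0,\beta_0$ alone.

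Second, I would pin down the two anchor values. For $k=1$ the best response is the myopic $q_0 = (1-q_{-1})/2$, so $\alpha_0 = \beta_0 = \tfrac12$ and $q^{(1)} = 1/3$, recovering the classical Cournot output. For $k=2$ the two-turn computation (the opponent replying myopically on the second turn) yields directly $q_0 = \tfrac12 - \tfrac13 q_{-1}$, i.e. $\beta_0 = \tfrac12,\ \alpha_0 = \tfrac13$, whence $q^{(2)} = (1/2)/(4/3) = 3/8$. The relative increase is $(3/8 - 1/3)/(1/3) = 1/8 = 12.5\%$, as claimed.

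Third, for the limit I would iterate the recursion of the Lemma for $(\alpha_l,\beta_l)$ and show it converges: the map $\alpha \mapsto 1/(4 - 2\alpha - \alpha^2\alpha')$ is a contraction near its fixed point, so $\alpha_0 \to \alpha^\ast \approx 0.2955977$ and, feeding this back, $\beta_0 \to \beta^\ast \approx 0.4790699$, giving $q^{(\infty)} = \beta^\ast/(1+\alpha^\ast) \approx 0.369767$, a rise of $\approx 10.9\%$. For the surplus statements I would use that, with demand $P = 1-Q$ and symmetric output $q$, social surplus is $S(q) = 2q(1-q)$, which is strictly increasing on $[0,\tfrac12]$. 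Since every $q^{(k)} \in [1/3, 3/8] \subset [0,\tfrac12]$, surplus is an increasing function of output and therefore peaks wherever output peaks. Plugging in gives $S(3/8) = 15/32$ against $S(1/3) = 4/9$, a rise of $(15/32 - 4/9)/(4/9) \approx 5.5\%$, and $S(0.369767) \approx 0.4661$ against $4/9$, a rise of $\approx 4.9\%$.

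The main obstacle is establishing that the maximum over all $k$ occurs exactly at $k=2$, since (as Figure \ref{sfig1} shows) $q^{(k)}$ is not monotone: it jumps to $3/8$ at $k=2$, drops, and then oscillates toward $0.369767$. To make ``peaks at $k=2$'' rigorous I would show $q^{(k)} < 3/8$ for every $k \ge 3$. The fast convergence of the recursion (within $10^{-4}$ of the limit by $k=10$) makes this tractable: compute $q^{(k)}$ explicitly for $3 \le k \le 10$ (each lies below $0.3702 < 3/8$) and bound the tail $k > 10$ using the contraction estimate, which confines $q^{(k)}$ to a small neighbourhood of $0.369767$, comfortably under $3/8$. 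Because $S$ is increasing on the relevant range, the same argument certifies that social surplus is maximised at $k=2$.
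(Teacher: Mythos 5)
Your proposal follows essentially the same route as the paper: the paper likewise derives the best-response recursion (the Lemma), reads off the $k=1$ and $k=2$ anchors, iterates the recursion numerically to obtain $\alpha_0 \to 0.2955977$ and $\beta_0 \to 0.4790699$, and concludes via the symmetric fixed point $q = \beta_0/(1+\alpha_0)$; the surplus figures follow from $S(q)=2q(1-q)$ exactly as you compute them. If anything, you supply rigor the paper leaves implicit: the paper never actually proves that output peaks at $k=2$ (it appeals to Figure \ref{sfig1}), whereas your finite-verification-plus-contraction plan would make that claim precise, and your uniqueness argument for the symmetric equilibrium and the monotonicity of $S$ on $[0,\tfrac12]$ are likewise only implicit in the paper. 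One numerical correction: your claimed uniform bound $q^{(k)} < 0.3702$ for $3 \le k \le 10$ fails at $k=4$, where the recursion gives $\alpha_0 = 3481/11692$ and $\beta_0 = 11281/23384$, hence $q^{(4)} = 11281/30346 \approx 0.3717$; since this is still below $3/8 = 0.375$, your verification strategy goes through unchanged, but the finite check should be stated as $q^{(k)} < 3/8$ for all $k \ge 3$ (with post-$k=2$ maximum $q^{(4)} \approx 0.3717$), not as a confinement to within $0.3702$.
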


\subsection{{\sc Stackelberg Behaviour.}}\ \\
We could also analyse this game under the leaf model, but this model is both 
less realistic here and trivial to analyse. However, it is interesting to note that for the leaf model with asymmetric lookahead, 
where player $i$ has $2$-lookahead and player $j$ has $1$-lookahead, we get the same equilibrium as the classic Stackelberg 
model for competition. Thus, the use of lookahead search can generate leader-follower behaviours.

\section{Unsplittable Selfish Routing}\label{sec:SR}

Now consider the unsplittable selfish routing game. We show that any $2$-lookahead equilibrium has a constant
coordination ratio. We then show how to derive a similar result for $2$-lookahead dynamics. 

For this game we have a directed graph $G = (V,E)$ and a set of $n$ agents. Agent $i$ wants to route $1$ unit of flow from a source $s_i$ to a destination $t_i$. 
Each agent $i$ chooses an $s_i-t_i$ path $P_i$ and these paths together generate a flow $f$. We assume that there 
is a linear {\em latency function} $\lambda_e(f_e) = a_ef_e + b_e$ on each edge edge $e\in E$. 
The total latency of a flow $f$ is denoted $l(f) = \sum_{e \in E} \lambda_e(f_e) f_e = (a_ef_e + b_e)f_e$. 
The latency of player $i$ is denoted $l_i(f) = \sum_{e \in P_i} a_ef_e + b_e$; observe that $l(f) = \sum_{i \in U} l_i(f)$.  
For this game, we consider $2$-lookahead in both the leaf and path models, under the average-case lookahead model.

Recall, in the leaf model, a player $i$'s move from a flow $f$ to a flow $f'$ is \textit{lookahead improving} if $E(l_i(f'')|f') > E(l_i(f'')|f)$ 
where $f''$ is the flow obtained after the next player (chosen uniformly at random amongst all the players) makes a (myopic) best response. 
In the path model a player $i$'s move from a flow $f$ to a flow $f'$ is \textit{lookahead improving} if 
$\frac{1}{2}l_i(f') + \frac{1}{2}E(l_i(f'')|f') > \frac{1}{2}l_i(f) + \frac{1}{2}E(f''|f)$ where $f''$ is as above.



\begin{theorem}\label{sthm:cong-eq-constant}
In the average-case $2$-lookahead leaf model, the coordination ratio for an equilibrium is at most $(1 + \sqrt{5})^2$.
\end{theorem}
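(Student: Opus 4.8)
The plan is to run a smoothness-style argument driven by deviations to the optimal routing, modified to absorb the single extra random move that distinguishes the $2$-lookahead condition from the myopic (Nash) one. Fix a $2$-lookahead equilibrium $f$, let $f^*$ be a flow of minimum total latency, and write $P_i$, $P_i^*$ for player $i$'s path in $f$ and in $f^*$. For each $i$ I would test the equilibrium against the single deviation in which $i$ reroutes onto $P_i^*$, producing the flow $f'$. Since players minimise latency, the fact that $f$ is an equilibrium means this deviation is not lookahead-improving, i.e. rerouting cannot lower $i$'s expected post-move latency, so $E(l_i(f'')\mid f) \le E(l_i(f'')\mid f')$, where $f''$ is the flow after a uniformly random next mover best-responds. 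First I would expand both expectations by conditioning on the identity of that next mover. The crucial simplification is that with probability $1/n$ the next mover is $i$ herself, in which case she plays the same best response whether the current configuration is $f$ or $f'$ (the two differ only in $i$'s own path); that contribution is identical on both sides and cancels. What remains is, on the right, $\frac{n-1}{n} l_i(f')$ plus the expected perturbation of $i$'s latency caused by some other player best-responding from $f'$, and, on the left, $\frac{n-1}{n} l_i(f)$ plus the analogous perturbation started from $f$.

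Next I would bound the two ingredients. The deviation latency obeys the usual congestion-plus-one estimate $l_i(f') \le \sum_{e \in P_i^*}(a_e(f_e+1)+b_e)$, since rerouting onto $P_i^*$ raises the load of each of its edges by at most one. For the perturbation terms I would use that a single best-responding player changes $i$'s latency by at most $a_e$ on each edge she newly occupies or vacates; this is precisely the extra unit of congestion that the lookahead move introduces relative to the myopic argument, and it is what ultimately doubles the constants. Summing the resulting per-player inequalities over all $i$, applying Cauchy--Schwarz to the cross term $\sum_e a_e f_e^* f_e \le \sqrt{\sum_e a_e (f_e^*)^2}\,\sqrt{\sum_e a_e f_e^2}$, and using $\sum_e a_e f_e^* \le l(f^*)$ (via integrality, $f_e^* \le (f_e^*)^2$), I would aim to collapse everything into a single scalar inequality of the shape
$$l(f) \le 4\,l(f^*) + 2\sqrt{l(f^*)\,l(f)},$$
with the constant terms $b_e$ riding along throughout since they satisfy the same linear estimates.

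Finally, setting $\rho = \sqrt{l(f)/l(f^*)}$ turns this into $\rho^2 \le 4 + 2\rho$, whose positive root is $\rho = 1+\sqrt 5$; hence $l(f)/l(f^*) \le (1+\sqrt 5)^2$, as claimed. It is worth noting that this is exactly $(2\phi)^2$, four times the golden-ratio bound $\phi^2$ that the same Cauchy--Schwarz routine yields for ordinary Nash equilibria — a sanity check that the lookahead correction contributes precisely a factor of two rather than something larger, and evidence that the target inequality above has the right coefficients.

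The hard part will be Step two: controlling and aggregating the perturbation terms. I must bound the total extra congestion that a single uniformly random best-responding player can dump onto the deviated optimal paths $\{P_i^*\}$, and charge it against $l(f)$ and $l(f^*)$ without degrading the constant. Getting the bookkeeping on these correction terms tight enough that they contribute exactly the doubling (turning $\phi^2$ into $(2\phi)^2=(1+\sqrt 5)^2$) rather than a larger penalty is the delicate part; once that is pinned down, the remainder is the standard deviation-plus-Cauchy--Schwarz calculation.
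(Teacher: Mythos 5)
Your proposal is correct and follows essentially the same route as the paper's own proof: the paper also tests each player's deviation onto her optimal path $P_i^*$, handles the random next mover by comparing the best-case outcome of staying (the mover vacates every shared edge, costing at most $a_e$ per edge) against the worst-case outcome of deviating (the mover piles entirely onto $P_i^*$, giving the $a_e(f_e+2)$ terms), sums over players, and applies Cauchy--Schwarz to arrive at exactly your inequality $l(f) \le 4\,l(f^*) + 2\sqrt{l(f^*)\,l(f)}$, hence $\rho \le 1+\sqrt{5}$; your explicit cancellation of the self-move term is a point the paper leaves implicit but in fact needs. The bookkeeping you flag as the hard part is settled in one line: a next mover can only vacate an edge of $P_i$ that she herself occupies, so only edges with $f_e \ge 2$ contribute, and the aggregated correction is $\sum_{e : f_e \ge 2} a_e f_e \le \frac{1}{2}\sum_e a_e f_e^2 \le \frac{1}{2} l(f)$, which is precisely where the factor of two (your $\phi^2 \mapsto (2\phi)^2$ doubling) enters.
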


\begin{proof}
This proof adapts the result in \cite{AAE05} to our setting.
Let $f$ be any flow at a lookahead equilibrium and $f^*$ be an optimal flow. Suppose player $i$ is 
taking path $P_j$ in flow $f$ and path $P_j^*$ in flow $f^*$. 
Let $J(e)$ be the set of players using edge $e$ in the flow $f$ and let $J^*(e)$ be the same for $f^*$.

At a lookahead equilibrium, player $j$ doesn't want to move from $P_j$ to $P_j^*$. 
This means that after a random/worst case next move,  the strategy $P_j$ has a higher (expected) payoff than
the strategy $P_j^*$. In particular, it must the case that the best possible outcome resulting from from choosing 
$P_j$ has a higher (expected) payoff than the worst possible outcome resulting from the strategy $P_j^*$.
In the former case, the best possible outcome is that the next player had also been using the path 
$P_j$ but then moves completely off the path.
Similarly, in the latter case, the worst possible outcome is that the next player had not been using any edge on the path 
$P_j^*$ but then changes strategy and also selects the path $P_j^*$ entirely.
Thus we must have:

\begin{eqnarray*}
  \sum_{e \in P_j^*} a_e(f_e + 2) + b_e   &\ge& \sum_{e \in P_j} a_ef_e + b_e - \sum_{e \in P_j: f_e \geq 2} a_e
  \end{eqnarray*}

Summing over all players $j$, we obtain
\begin{eqnarray*}
\sum_j   \sum_{e \in P_j^*} a_e(f_e + 2) + b_e   &\ge&  \sum_j \left(\sum_{e \in P_j} a_ef_e + b_e - \sum_{e \in P_j: f_e \geq 2} a_e \right) \\
  &=&\sum_{e \in E} \sum_{j \in J(e)} a_ef_e + b_e - \sum_j \sum_{e \in P_j: f_e \geq 2} a_e \\
&=& \sum_{e \in E} (a_ef_e + b_e) f_e - \sum_{e \in P_j: f_e \geq 2} a_ef_e \\
&\ge& \sum_{e \in E} (a_ef_e + b_e) f_e- \sum_{e \in P_j: f_e \geq 2} \frac12 a_ef_e^2 \\
&\ge& \sum_{e \in E} \frac12 (a_ef_e + b_e) f_e \\
&=& \frac12 \sum_{e \in E} \lambda_e(f_e) 
\end{eqnarray*}

Rearranging gives and applying the Cauchy-Schwartz inequality\footnote{{\small For any two vectors ${\bf x}$ and ${\bf y}$, 
we have ${\bf x}^T{\bf y} \le \sqrt{{\bf x}^T{\bf x}} \cdot \sqrt{{\bf y}^T{\bf y}}$.}} produces
\begin{eqnarray*}
\frac12 \sum_{e \in E} \lambda_e(f_e) &\le&    \sum_j   \sum_{e \in P_j^*} a_e(f_e + 2) + b_e     \\
 & =& \sum_{e \in E} (a_e(f_e + 2) + b_e)f^*_e  \\
& \le& \sum_{e \in E} a_ef_ef_e^* + (2a_e+b_e)f_e^*  \\
& \le & \sum_{e \in E} a_ef_ef_e^* + 2\lambda_e(f_e^*)  \\
& \le &  \sqrt{\sum_{e \in E} a_ef_e^2}\cdot \sqrt{\sum_{e \in E} a_e{f_e^*}^2} + 2\sum_{e \in E} \lambda_e(f_e^*) \\
& \le &  \sqrt{\sum_{e \in E} \lambda_e(f_e)}\cdot \sqrt{\sum_{e \in E} \lambda_e(f_e^*)} + 2\sum_{e \in E} \lambda_e(f_e^*)  
\end{eqnarray*}

Set 
$\rho = \sqrt{\frac{\sum_e \lambda_e(f_e)}{\sum_e \lambda_e(f_e^*)}}$
and observe that $\rho^2$ is the coordination ratio, given we choose the worst lookahead equilibrium $f$.
Consequently,  $\frac{1}{2}\rho^2 \leq \rho + 2$.
Solving gives $\rho \le 1+\sqrt{5}$ as desired.
\end{proof}

Next we consider the lookahead dynamics and study coordination ratio
for the lookahead dynamics.

\begin{theorem} \label{sthm1}
In the average-case $2$-lookahead model, the coordination ratio for lookahead dynamics is a constant for the leaf model.
\end{theorem}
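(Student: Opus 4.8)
The plan is to adapt the potential-function convergence argument of Awerbuch, Azar and Epstein \cite{AAE05} (see also the $1$-lookahead treatment in \cite{GMV05}) to the lookahead setting, using the per-player inequality derived in the proof of Theorem \ref{sthm:cong-eq-constant} as the engine that forces improvement out of any expensive state. Let $\Phi(f) = \sum_{e \in E} \sum_{k=1}^{f_e} \lambda_e(k)$ be Rosenthal's potential; since the latencies are linear one has $\tfrac12 l(f) \le \Phi(f) \le l(f)$, so it suffices to control $\Phi$ along a polynomially long random walk on $\G$. I will also use the exactness property that when a single player reroutes, the change in $\Phi$ equals exactly that player's change in realized latency.

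First I would establish a \emph{bad state implies large total improvement} lemma. Fix an arbitrary flow $f$ and an optimal flow $f^*$ and rerun the summation of Theorem \ref{sthm:cong-eq-constant} \emph{without} assuming $f$ is an equilibrium: for any player $j$ who does have a lookahead improving move, the per-player inequality used there is violated by exactly $j$'s lookahead gain. The same Cauchy--Schwarz manipulation, rearranged, then shows that the aggregate lookahead gain available from the moves $P_j \to P_j^*$ is at least $\tfrac12 l(f) - \sqrt{l(f)}\sqrt{l(f^*)} - 2\,l(f^*)$. Hence there is a constant $C$ so that whenever $l(f) \ge C\, l(f^*)$ this aggregate gain is at least a fixed constant fraction of $l(f)$; consequently a player chosen uniformly at random has expected lookahead improvement at least $\Omega\!\big(l(f)/n\big)$.

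The next and hardest step is to convert expected lookahead improvement into expected potential drop. A $2$-lookahead best-response move optimises the expected latency \emph{after} the subsequent move rather than the immediate latency, so a single step of the walk may momentarily \emph{raise} $\Phi$, and one cannot invoke the potential as directly as in the myopic ($1$-lookahead) case. I would resolve this by an amortised, pairwise accounting along the walk: using exactness and linearity of expectation, I would bound the expected change in $\Phi$ over a pair of consecutive moves, exploiting that the reactive move anticipated in the active player's calculation is a (myopic) best response that can only decrease $\Phi$, while each realized move on the walk contributes both as an active move and as the follow-up of the previous step. Showing that, despite the mismatch between the anticipated reactive move and the actual next $2$-lookahead move, the expected one-step decrease of $\Phi$ remains $\Omega(l(f)/n)$ whenever $l(f)\ge C\,l(f^*)$ is the crux of the argument.

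Finally I would assemble these pieces into the standard drift conclusion. While the walk is in an expensive state the expected value of $\Phi$ contracts by a factor $1-\Omega(1/n)$ per step, and $\Phi$ is bounded below by $\Phi(f^*)\ge \tfrac12 l(f^*)$, whereas a cheap state costs only $O(1)\cdot l(f^*)$ by definition of $C$. A potential/drift argument then gives that after $O\!\big(n\log(\Phi(f_0)/l(f^*))\big)$ steps the expected cost is $O(1)\cdot l(f^*)$, so the time-average expected social cost over a polynomially long random walk on $\G$ is within a constant factor of optimal, establishing a constant coordination ratio for $2$-lookahead dynamics in the leaf model.
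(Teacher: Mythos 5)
Your plan correctly reproduces the easy half of the argument, but it leaves the genuinely hard half unproved, and that half is exactly where the difficulty of the theorem lives. Your Step 1 (rerunning the summation of Theorem \ref{sthm:cong-eq-constant} out of equilibrium to show that an expensive state admits aggregate lookahead gain $\Omega(l(f))$, hence expected gain $\Omega(l(f)/n)$ for a random mover) is sound in spirit, though the slack in the per-player inequality is only a \emph{lower} bound on the lookahead gain, not ``exactly'' it. The gap is Step 2: you need the expected one-step drop of $\Phi$ to be $\Omega(l(f)/n)$, but a $2$-lookahead improving move can strictly \emph{increase} the mover's immediate latency, and by exactness that increase is precisely the increase in $\Phi$. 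The lookahead gain is a comparison of anticipated costs after a hypothetical myopic response by a random opponent, and that response never occurs on the actual walk: the realized next move is made by a freshly chosen random player using $2$-lookahead, not the myopic best response the mover planned around. Your proposed ``pairwise amortised accounting'' must close this mismatch in both the identity and the type of the follow-up move, and you offer no mechanism for doing so --- you explicitly label it ``the crux of the argument'' and stop there. A plan whose central step is acknowledged but unresolved is not a proof.

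The paper avoids this conversion problem altogether; it never relates the lookahead gain to any potential drop. Its engine is Lemma \ref{sFundIneq}: for \emph{any} lookahead improving move, the mover's realized latency is bounded, $l_i(f_i') \le 2\,l_i(f) + \frac{1}{n} l(f)$, because the mover's lookahead valuation of the new path is at least $(1-\frac{1}{n})\, l_i(f_i')$ while her valuation of staying put is at most $2\,l_i(f)$; and when the move is a lookahead \emph{best} response, the same reasoning against the optimal path $P_i^*$ yields $l_i(f_i') \le 2\, l_i(f^*) + \frac{1}{n} l(f^*)$. With these bounds the paper works with the social cost $l(f)$ directly (no Rosenthal potential) and proves a dichotomy (Lemma \ref{slastlemma}): either $E(l(f')\,|\,f) \le (1-\frac{1}{2n})\, l(f)$, or $l(f)$ is already within a constant factor of optimal; together with the one-step damage bound $E(l(f')\,|\,f) \le (1+\frac{4}{n})\, l(f)$, the random-walk machinery of \cite{GMV05} finishes the proof. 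Note that this last damage bound addresses a second omission in your sketch: on a polynomially long walk the time-average includes excursions upward from cheap states, so you also need to control how fast the cost can rise after the walk reaches a cheap state, which your drift conclusion silently assumes away.
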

\begin{proof} 

We follow a similar approach to Theorem 4.1 in \cite{GMV05} and start by proving some sub-lemmas.

\begin{lemma}\label{sFundIneq}
If player $i$ makes a lookahead improving move from path $P_i$ to $P_i'$ which changes the flow 
from $f$ to $f_i'$ then $l_i(f_i') \leq 2l_i(f) + \frac{1}{n}l(f)$.
\end{lemma}

\begin{proof}
So player $i$'s lookahead cost with $f_i'$ is less than his cost with $f$. Moreover, we can lower bound the 
lookahead cost of $f_i'$ by the quantity 
\begin{eqnarray*}
\lefteqn{ \sum_{e \in P_i'} \frac{f_e}{n} (a_ef_e + b_e) + (1-\frac{f_e}{n})(a_e(f_e+1) + b_e) }\\
 &=& \sum_{e \in P_i'} a_e(f_e+1) + b_e - \frac{f_e}{n} a_e \\
  &\ge& \sum_{e \in P_i'} (1-\frac{1}{n})a_e(f_e+1) + b_e  \\
 &\ge& \sum_{e \in P_i'} (1-\frac{1}{n})(a_e(f_e+1) + b_e)  \\
&=& (1-\frac{1}{n})l_i(f_i')
\end{eqnarray*}
This would be the cost incurred if the randomly selected next player $j$ avoids
any edge $e$ that player $i$ is on (either by moving away from $e$ or not moving onto $e$).
Using similar reasoning, we may upper bound the cost to player $i$ of sticking with $P_i$ by 
\begin{eqnarray*}
\lefteqn{\sum_{e \in P_i} \frac{f_e}{n} (a_ef_e + b_e) + (1-\frac{f_e}{n})(a_e(f_e+1) + b_e) }\\
&=&  \sum_{e \in P_i}  (a_ef_e + b_e) + (1-\frac{f_e}{n}) a_e \\ 
&\le &  \sum_{e \in P_i}  a_e(f_e+1) + b_e  
\ \le\    \sum_{e \in P_i}  2a_ef_e + b_e  \\
&\le& 2 l_i(f) 
\end{eqnarray*}
Here we assumed the next player $j$ selects every edge $e$ that player $i$ is on
(either by staying on $e$ or by moving onto $e$). Therefore,
$l_i(f_i') \le 2(1+\frac{1}{n-1})l_i(f)$ 
which implies the statement in the lemma.
\end{proof}

  Applying  Lemma \ref{sFundIneq} with Lemma 4.2 in \cite{GMV05}, 
we get:

\begin{lemma} \label{slem1}
If agent $i$ changes his path from $P_i$ to $P_i'$, changing the flow from $f$ to $f_i'$, then $l(f_i') \leq l(f) + (d+1)l_i(f_i')-l_i(f)$. 
In particular, if agent $i$ makes a lookahead improving move then $l(f_i') \leq (1+\frac{1}{n})l(f) + 3l_i(f)$.
\end{lemma}

Now, applying  Lemma \ref{slem1} with Lemma 4.3 in \cite{GMV05}.

\begin{lemma}
Let $f$ be the current flow. Suppose we chose a player at random and they make a lookahead best response 
resulting in flow $f'$. Then $E(l_i(f')|f) \leq (1+\frac{4}{n})l(f)$.
\end{lemma}
Finally, we prove the following lemma which will imply the statement of the theorem. 

\begin{lemma}\label{slastlemma} Let $f$ be the current flow. Suppose we chose a player at random and they make a lookahead best 
response resulting in flow $f'$. Then either $E(l(f') | f) \leq (1 - \frac{1}{2n})l(f)$ or $l(f) < (6 + \sqrt{37})OPT$.
\end{lemma}

\begin{proof}

Suppose player $i$ changes his path from $P_j$ to $P_j'$ resulting in the flow changing from $f$ to $f_i'$. 
Thus $E(l(f')| f)  =  \frac{1}{n} \sum_{i} l(f_i')$.

Case 1: $\sum_{i} 4 l_i(f_i') \leq \sum_{i} l_i(f)$
\begin{eqnarray*}
E(l(f') | f) & = & \frac{1}{n} \sum_{i} l(f_i') \\
& \leq & \frac{1}{n} \sum_{i} l(f) + l_i(f_i') - l_i(f) + \sum_{e \in P_i'-P_i} a_ef_{i,e} \\
& \leq & \frac{1}{n} \sum_{i} l(f) + 2l_i(f_i') - l_i(f) \\
& \leq & \frac{1}{n} \sum_{i} l(f) + \frac{1}{2}l_i(f) - l_i(f) \\
& = & (1 - \frac{1}{2n})l(f)
\end{eqnarray*}

Case 2: $\sum_{i} 4 l_i(f_i') > \sum_{i} l_i(f) = l(f)$

Let $f^*$ be the optimal flow and let $P_i^*$ be player $i$'s path in this flow. Let $J^*(e)$ be the set of players 
on edge $e$ in $f^*$. Since $P_i'$ is a lookahead best response, we may apply Lemma \ref{sFundIneq} to see 
that $l_i(f_i') \leq 2l_i(f^*) + \frac{1}{n}l(f^*)$. Thus
\begin{eqnarray*}
l(f) & <  & 4 \sum_{i} l_i(f_i') 
\ \leq\   4 \sum_{i} 2 l_i(f^*) + \frac{1}{n}l(f^*) \\
& = & 12 \sum_{i} l_i(f^*) 
\ =\   12 \sum_{i} \sum_{e \in E} a_e f_e^* + b_e \\
& \leq & 12 \sum_{e \in E} \sum_{i \in J*(e)} a_e (f_e + 1) + b_e \\
& = & 12 \sum_{e \in E} a_e f_ef_e^* b_ef_e^* + a_ef_e^* \\
& \leq & 12 \sqrt{l(f)l(f^*)} + l(f^*)
\end{eqnarray*}

where the last inequality follows from Cauchy-Schwartz. Thus, if we set $x = \sqrt{\frac{l(f)}{OPT}}$, the above 
can be transformed into the inequality
$x^2 \leq 12 x + 1$.
\end{proof}

The remainder of the proof of Theorem \ref{sthm1} follows by applying the above lemmas as shown in \cite{GMV05}.
 \end{proof}


\subsection{{\sc Valid Utility Games.}}\ \label{sec:UG} \\
Here is a bad example for the path model (a slightly modified example applies to the leaf model). 
It applies for any number $t$ of lookahead moves.
Take a Steiner Set System $S(2, k, n)$. For example, these exist with $n= q^2+q+1$ and $k=q+1$.
Let each subset in the system induce a "sub-game" - thus each pair of players are
together in exactly one subgame. Consequently, each player is in $\frac{n-1}{k-1}= q+1=k$ subgames, and $n$ games
in total.
The strategy set of a player $i$ in subgame $g$ is $\{y_i^g, x_{i,1}^g, x_{i,2}^g, \dots,  x_{i,k}^g\}$.
It has one {\em nice} strategy and $k$ {\em naughty} strategies:
player $i$ always gets one point for playing the nice strategy $y_i^g$, but gets two points for playing a naughty strategy 
$x_{i,l_i}^g$ {\em provided}
$\sum_{j} l_j = i\mod k$, where the sum is over all players $j$ who are playing a strategy $x^g_{j,l_j}$ - we call
$i$ the winner of subgame $g$ in the case. 

Thus a player $i$ who moves next can guarantee $k$ points by playing $y$s but can guarantee
$2k$ points by playing $x^g$s to win all $k$ subgames it is in. Moreover,
the player can lose at most one game in each subsequent time period. This follows
as the next $t=k$ players share exactly one game each with player $i$.
Thus the player, in the worst case receives $2k+2(k-1)+\cdots+4+2= k(k+1)$ in the next $k$ moves.
This is greater than the $k^2$ payoff from playing only $y$s.

Consider then the dynamics of this game under $k$-lookahead search.
Over time, at any state of play, the total value of the game will be $2n$; in each of the $n$ subgames
all the players are behaving naughtily. 
The optimal value however is $n(k+1)$;  in each subgame, $k-1$ of the players are nice and one is naughty.
So we have shown:
\begin{lemma}
For valid utility games, in the path model the coordination ratio of $k$-lookahead dynamics 
is at least $\frac{k+1}{2} = \frac{t+1}{2} \ge \frac12 \sqrt{n}$. \qed
\end{lemma}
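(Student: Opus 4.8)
The plan is to treat the all-naughty configuration constructed above as a trap for the $k$-lookahead best-response dynamics and to compare its social value against a good feasible allocation. Since the coordination ratio of dynamics is the worst ratio of OPT to the expected social value along a polynomial random walk on the lookahead state graph, it suffices to exhibit a single walk that is confined to states of social value $2n$ while OPT is at least $n(k+1)$.

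The crucial step is to confirm that, at every all-naughty state, playing all naughty is a $k$-lookahead best response for the player about to move, so that no move ever reintroduces a nice strategy and the walk stays all-naughty. Fix a mover $i$. Because its $k$ subgames are independent and the winning residue $\sum_j l_j \equiv i \pmod{k}$ in each subgame can be forced by $i$'s own choice of $l_i$, player $i$ can simultaneously win all $k$ of its subgames and collect $2k$ immediately. For the rest of the lookahead horizon I would invoke the defining $S(2,k,n)$ property: any other player shares exactly one subgame with $i$, so each subsequent mover can dislodge $i$ from at most one subgame. Summing the resulting worst-case per-period payoffs gives $2k + 2(k-1) + \cdots + 2 = k(k+1)$, which strictly exceeds the $k^2$ obtained by playing all nice (nice scores $k$ every period, independent of the others' moves). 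The same per-subgame comparison rules out any mixed nice/naughty deviation, so all-naughty is a lookahead best response and the walk never leaves the all-naughty region, whichever player is drawn at each step.

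With the walk pinned to all-naughty states, I would read off the two social values. Every all-naughty subgame has exactly one winner scoring $2$ and $k-1$ players scoring $0$, so the social value is the constant $2n$ along the entire walk. For the optimum I would display the allocation in which each subgame has $k-1$ nice players and one winning naughty player, worth $k+1$ per subgame; to realize it simultaneously across the overlapping subgames I would take a perfect matching between the $n$ subgames and the $n$ players in the $k$-regular subgame--player incidence structure (which exists by Hall's theorem), making each matched player the lone naughty winner of its own subgame and nice everywhere else. This certifies OPT $\ge n(k+1)$, so the coordination ratio is at least $\frac{n(k+1)}{2n} = \frac{k+1}{2}$. Substituting $k = q+1$ and $n = q^2+q+1 < (q+1)^2$ gives $k > \sqrt{n}$ and hence $\frac{k+1}{2} \ge \frac{1}{2}\sqrt{n}$, with the identity $t=k$ recording that the bound holds for $t$-lookahead.

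The step I expect to be hardest is making ``all-naughty is a best response'' fully rigorous rather than merely beating the all-nice strategy: the argument must be run in the worst-case lookahead reading, where the inequality $k(k+1) > k^2$ has to survive the adversary's freedom to choose which subgame each subsequent mover attacks, and it must dominate every partial nice/naughty deviation and not just the two extremes. A smaller technical point is the indexing convention behind the winning residue: to force exactly one winner per subgame (hence social value exactly $2n$) the $k$ players of each subgame should carry distinct residues mod $k$, which I would either fold into the labeling of the projective plane or simply absorb, noting that any deviation from exactly one winner only lowers the walk's value and therefore strengthens the lower bound.
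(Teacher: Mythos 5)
Your proposal follows essentially the same route as the paper's own argument: trap the $k$-lookahead dynamics in all-naughty states via the worst-case comparison $2k+2(k-1)+\cdots+2=k(k+1)>k^2$, observe that each trapped state has social value $2n$ while OPT is at least $n(k+1)$ (one naughty winner plus $k-1$ nice players per subgame), and conclude the ratio $\frac{k+1}{2}\ge\frac12\sqrt{n}$. Two small remarks: the Hall's-theorem matching is unnecessary, since strategies are chosen per subgame a player may be the lone naughty winner in several subgames at once; and your ``absorb'' fallback for residue collisions points the wrong way (two players of a subgame sharing the winning residue would give two winners and \emph{raise} the trapped value, weakening the bound), so your primary fix --- labelling so that the $k$ players of each subgame have distinct residues mod $k$ --- is the one to keep, and it is exactly what the paper's construction implicitly assumes.
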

 
\subsection{{\sc Basic Utility Games.}}\ \\
For basic utility games, good guarantees can be obtained for the path model. More interestingly, 
for the leaf model lookahead equilibria can be extremely bad, even for $2$-lookahead equilibria.

\begin{lemma}\label{sbasicutility}
In  basic utility games, the coordination ratio of $2$-lookahead equilibria
can be arbitrarily bad in the leaf model.
\end{lemma}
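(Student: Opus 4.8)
The plan is to exhibit an explicit basic-utility game, in the worst-case leaf model, together with a $2$-lookahead equilibrium whose social value is a vanishing fraction of the optimum. I would use a weighted coverage game, since these are the canonical basic-utility systems. Take $n$ players, introduce a single shared \emph{jackpot} resource of weight $M$ (with $M$ large), and, for each player $i$, a \emph{private} resource of weight $1$ coverable only by player $i$. Player $i$ has exactly two strategies: a \emph{greedy} action covering the jackpot, or a \emph{safe} action covering her own private resource. The social function $\gamma$ is the total weight of covered resources, which is monotone submodular with $\gamma(\emptyset)=0$; the private utility $\alpha_i(\bar{s})=\gamma(\bar{s})-\gamma(\bar{s}_{-i})$ (the marginal coverage of player $i$, where $\bar{s}_{-i}$ deletes $i$'s action) makes this a basic-utility game, and validity $\sum_i\alpha_i\le\gamma$ holds because, for a monotone submodular function with $\gamma(\emptyset)=0$, the sum of the single-element marginals never exceeds $\gamma$.

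Next I would pin down the two states of interest. The optimum puts one player on the jackpot and the remaining $n-1$ on their private resources, for social value $M+(n-1)$. The candidate bad state $\bar{s}^0$ is the \emph{all-safe} profile, in which every player covers her own private resource; its social value is $n$, so the coordination ratio is at most $n/(M+n-1)$, which tends to $0$ as $M\to\infty$, giving the claimed $o(1)$ bound. It therefore remains only to verify that $\bar{s}^0$ is a worst-case $2$-lookahead equilibrium.

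The heart of the argument is this equilibrium check, and it is exactly where the worst-case assumption does the work. At $\bar{s}^0$ each player $i$ uniquely covers her private resource, so her marginal contribution is $1$ regardless of what the other players do; hence the lookahead value of \emph{staying} on the safe action is exactly $1$. If instead player $i$ switches to the greedy action she momentarily becomes the sole jackpot-cover with marginal $M$, but in the worst-case leaf model the adversary picks the next mover and move so as to minimise $i$'s payoff: some other player $j$ abandons her private resource for the jackpot, the jackpot becomes doubly covered, and $i$'s marginal contribution at the resulting leaf collapses to $0$. Thus the lookahead value of deviating is $0<1$, no player has a lookahead-improving move, and $\bar{s}^0$ is an equilibrium with the asserted ratio.

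The main obstacle, and the point I would be most careful about, is the reliance on the \emph{worst-case} variant of the leaf model. Under average-case lookahead the next mover is a uniformly random \emph{best-responding} player, and in a coverage game a best-responding opponent would never pile onto a resource that $i$ already covers, since that yields the opponent zero marginal value; the ``fear'' sustaining the bad equilibrium then evaporates. I would therefore state the lemma for the worst-case leaf model, and note as a sanity check that the submodular structure of any basic-utility coverage game rules out the analogous average-case construction — so any purely average-case bad example, should one exist, would have to exploit non-coverage submodularity. This is precisely why the worst-case model is the right setting for the lemma.
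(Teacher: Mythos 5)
There is a genuine gap, and it lies exactly where you placed the weight of the argument: the equilibrium check for the all-safe profile. You have misread what ``worst-case'' means in this paper's lookahead model. In the model (see both the general definition, where internal nodes of the search tree are evaluated by $\Pi_{p,\bar{v}} = \max_{u\in \mathcal{C}(\bar{v})} [ r_{p,\bar{v}} + \Pi_{p,\bar{u}}]$, and its instantiation for GSP auctions, where $\tilde{u}^i(b^i) = \min_{j} u^{ij}(b^i)$ with $u^{ij}$ defined as $i$'s payoff \emph{after $j$ makes a best-response move}), the adversary chooses only \emph{which} player moves next; that player then plays a move maximising \emph{her own} payoff. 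The adversary does not get to choose the move itself. In your coverage game, if player $i$ deviates to the jackpot, then whichever opponent $j$ the adversary selects, $j$'s best response is to stay on her private resource: piling onto the jackpot would drop $j$'s Vickrey payoff from $1$ to $0$. Hence at every leaf of $i$'s search tree, $i$ is still alone on the jackpot and her leaf payoff is $M$, not $0$. The deviation is therefore lookahead improving, the all-safe profile is \emph{not} a $2$-lookahead equilibrium, and your game in fact has only (near-)optimal equilibria. Notice that your own final paragraph contains precisely the observation that kills the construction --- a best-responding opponent never piles onto a covered resource --- but you attributed it only to the average-case model, when it applies equally to the worst-case model, since in both the subsequent mover best-responds and only the identity of the mover differs.

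The paper's proof avoids this trap by making the opponent's \emph{genuine} best responses do the punishing. It builds a symmetric $2$-player game with three actions $\{B,T,G\}$ and a hand-crafted submodular social function (with payoffs given by the Vickrey condition) such that: the best response to $B$ is $T$, leaving the deviator leaf payoff $6$; the best response to $T$ is $G$, leaving leaf payoff $4$; and the best response to $G$ is $G$, leaving leaf payoff $5$. Since in the leaf model the current state is irrelevant to the deviator's calculation, every player always prefers $B$, so $(B,B)$ (value $12$) is the unique $2$-lookahead equilibrium while the optimum $(G,G)$ has value $\kappa$, which can be made arbitrarily large. Any repair of your approach would need this same feature: a state in which grabbing the high-value resource is deterred by what opponents \emph{rationally} do afterwards, not by what an adversary could force them to do.
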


\begin{proof}
Consider the following symmetric 2-player game. Let each player have a groundset $\{B,T,G\}$.
A feasible strategy consists of playing at most one action in the groundset.
We create a submodular social function using the table
\begin{center}
\begin{tabular}{c|cccc}
&  $\emptyset$ & B & T &G \\
\hline 
B & 6 & 6 & 6 & 1\\
T &  $\kappa$-9 & $\kappa$-9 & 7&4\\
G &  $\kappa$-5 &  $\kappa$-10& 8&5
\end{tabular}
\end{center}
Set $\gamma(\emptyset, \emptyset) =0$. Then
let the $ij$th entry of the matrix, $\delta_{ij}$, be the marginal value of adding action $i$ when action $j$ is
being played by the other player. 
For example, $\gamma(B,\emptyset) = \gamma(\emptyset, \emptyset) +\delta_{B,\emptyset}= 0+6=6$.
Similarly, $\gamma(B,B)=12,\gamma(T,\emptyset)= \kappa-9, \gamma(G,\emptyset)=\kappa-5,
\gamma(B,G)=\kappa-4,\gamma(B,T)=\kappa-3,\gamma(T,T)=\kappa-2,\gamma(T,G)=\kappa-1,\gamma(G,G)=\kappa$.

We need to extend this definition to all subsets.
Suppose that Player $1$ is currently choosing $S_1$ and
Player $2$ is currently choosing $S_2$.
To complete the definition of $\gamma$, we say that the marginal
value of adding action $i$ to the subset $S=S_1\cup S_2$, is
$\delta_{i,S} = \min_{j\in S_1\cup S_2} \delta_{ij}$.

Note that this is true if $i$ is added to $S_1$ and if $i$ is added to $S_2$.
This processes produces a submodular social function.
The payoff functions are then defined in accordance with the Vickrey condition.

Clearly, as the players are constrained to play singleton actions, the optimal
solution $\Omega=\{G,G\}$ has value $\kappa$.
We claim that $\{B,B\}$, with social value $12$, is the {\em only} equilibrium in the leaf model.
Thus, for any $\kappa$, we can be a factor $\Omega(\kappa)$ away from the optimal social value.

To prove this, first suppose that Player $1$ plays $B$. According to the Vickrey condition, the best response
of Player $2$ is to play $T$ (she needs to choose $*$ maximize $\gamma(B,*)$). The payoff to player $1$ is then
$\gamma(B,T) - \gamma(\emptyset, T) = (\kappa-3)-(\kappa-9) = 6$.
Second suppose that Player $1$ plays $T$. According to the Vickrey condition, the best response
of Player $2$ is to play $G$ (she needs to maximize $\gamma(T,*)$). The payoff to player $1$ is then
$\gamma(T,G) - \gamma(\emptyset, G) = (\kappa-1)-(\kappa-5) = 4$.
Finally suppose that Player $1$ plays $G$. According to the Vickrey condition, the best response
of Player $2$ is to play $G$ - observe this must be the case as $(G,G)$ is the optimal solution. 
The payoff to player $1$ is then
$\gamma(G,G) - \gamma(\emptyset, G) = \kappa-(\kappa-5) = 5$.

Thus, with $2$-lookahead, Player $1$ will always think it in his interest to play $B$.
(Note that in the leaf model, it is irrelevant for Player $1$ what strategy Player $2$ is currently 
playing.) By a symmetric argument, Player $2$ will always think it in her interest to play $B$.
\end{proof}

\section{Shapley Network Design Games}\label{sec:Shapley}
For our final example we show that the use of lookahead search may 
allow for ``uncoordinated'' cooperative behaviours. By looking ahead, a player may select a cooperative move
whose consequence can be to induce other players to also make cooperative moves. 
We give a very simple illustration of this behaviour. Consider the following 
Shapley network design game:
Given a network, there is a single source $s$ and a single sink $t$.
We have $n$ players, each wanting to route from $s$ to $t$. There are $N$ paths (where $N$ may be exponential) to choose from.
The cost of any link is equally shared between those players that use it.
The coordination ratio is then easily seen to be at least $n$.
However, the coordination ratio improves by a factor $k$, when the players use $k$-lookahead search.

\begin{theorem}\label{sthm:shap}
The coordination ratio of $k$-lookahead dynamics for Shapley network design games in the leaf model is at most $n/k$.
\end{theorem}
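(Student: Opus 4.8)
The plan is to reduce the bound to a single per-player inequality that must hold at any state admitting no lookahead-improving move, and then to carry this over to the random-walk dynamics. First I would fix the accounting. Under fair cost sharing an edge $e$ of cost $c_e$ used by $f_e\ge 1$ players contributes exactly $c_e$ to the total, so the social cost of a state $f$ equals both the sum of used-edge costs and the sum of the players' shares, $C(f)=\sum_{e:\,f_e\ge1}c_e=\sum_i c_i(f)$, where $c_i(f)=\sum_{e\in P_i}c_e/f_e$. Since every player routes between the same terminals $s,t$, the cheapest configuration puts all $n$ players on a single minimum-cost $s$--$t$ path $P^*$; hence $\opt=c^*$, the cost of $P^*$, and each player pays $c^*/n$ there.

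The heart of the argument is to evaluate the cooperative move ``switch to $P^*$'' in the $k$-lookahead leaf model. A depth-$k$ search tree contains $k$ moves (the mover's own move together with $k-1$ replies), so in the most favourable leaf the mover and the next $k-1$ movers all pile onto $P^*$, leaving $k$ players sharing it. (I assume $k\le n$, the regime in which the bound is meaningful; for $k>n$ all players fit on $P^*$ and the optimum is reached.) In that leaf the mover's share is $c^*/k$, so the leaf-model value of switching to $P^*$ is at most $c^*/k$. The point I would stress is that this optimistic leaf is precisely the outcome produced by the self-interested replies of the intermediate players: each reply-maker, running her own shorter lookahead, also foresees that $P^*$ fills up and hence that joining it yields her a share of at most $c^*/k$, so she too prefers to move onto $P^*$. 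This is the ``uncoordinated cooperation'' advertised before the theorem.

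Granting this, the improving-move lemma is immediate: if some player currently pays $c_i(f)>c^*/k$, then switching to $P^*$ strictly lowers her lookahead cost from $c_i(f)$ to at most $c^*/k$, so $f$ is not a lookahead equilibrium. Contrapositively, at any lookahead equilibrium every player pays at most $c^*/k$, whence $C(f)=\sum_i c_i(f)\le n\cdot c^*/k=(n/k)\opt$. For $k=1$ this recovers the myopic bound $c_i\le c^*$ and coordination ratio $n$, and the example consisting of $n/k$ disjoint unit-cost $s$--$t$ paths each carrying $k$ players (share $1/k$ each, optimum all on one path) shows the factor $n/k$ is tight.

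To obtain the claim for the \emph{dynamics} rather than merely for equilibria, I would argue as in Section~\ref{sec:SR} and in Goemans et al.~\cite{GMV05}: whenever the current cost exceeds $(n/k)\opt$ some player pays more than $c^*/k$ and therefore has a lookahead-improving move onto $P^*$, which the random walk selects with probability at least $1/n$ at each step; a standard expected-decrease estimate along the walk then bounds the expected social cost over a polynomially long run by $(n/k)\opt$. The main obstacle is exactly the claim I flagged above — that the optimistic ``$k$ sharers on $P^*$'' leaf is realised by the backward induction. In a worst-case move order an adversary could interpose low-paying players who do not wish to join $P^*$ and so stall the cascade; making the cooperation self-enforcing, and quantifying the residual value of the move even when only a random subset of the next $k-1$ movers cooperate, is the step that needs care and is where the average-case ordering is used.
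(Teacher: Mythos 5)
Your outline matches the paper's at the top level (a deviation onto the cheapest path $P^{SP}$ is worth at most $\opt/k$ under $k$-lookahead, hence total cost at most $(n/k)\opt$), but the step you yourself flag as ``the main obstacle'' is precisely the theorem's entire technical content, and the heuristic you offer in its place is unsound. You argue that the intermediate movers will join $P^*$ because each ``foresees that $P^*$ fills up and hence that joining it yields her a share of at most $c^*/k$.'' That is circular (it assumes the cascade in order to prove the cascade) and false as stated: a follower whose current share is below $c^*/k$ --- say she shares a cheap path with many other players --- strictly prefers to stay put, so the pile-on can stall and the deviator can be left alone on $P^{SP}$ paying $c^*$ rather than $c^*/k$. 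Consequently your ``improving-move lemma'' (at a lookahead equilibrium every player's current share is at most $c^*/k$) does not follow from anything you have established; note also that it conflates a player's current share with her leaf-model lookahead value, which are different quantities, since the followers' moves change her share before the leaf is reached.

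The paper closes this gap with a claim of a different shape. It does not assert that followers copy whatever the deviator plays; instead it proves, by an exchange argument on cost shares (exploiting that all players route between the same $s$ and $t$), that in the backward-induction tree each mover's \emph{optimal} choice coincides with the path that the next mover will choose in reply. By induction, all $k$ movers appearing in the search tree coalesce on one common path, so each receives at most a $1/k$ share of its cost at the leaf; comparing player $1$'s optimal choice against the available option $P^{SP}$ then bounds every player's lookahead value by $\opt/k$, which yields the $(n/k)\opt$ bound at equilibrium. If you want to salvage your write-up, this exchange lemma (your ``self-enforcing cooperation'') is the statement you must formulate and prove; the averaging-over-move-orders idea you point to does not substitute for it --- indeed the paper proves the claim in the \emph{worst-case} ordering model and remarks that the average case uses the same idea.
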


\begin{proof}
We present the proof for the worst-case lookahead model. The proof for the
average-case model uses the same idea.
Assume the players are currently choosing the paths $\{\bar{P}_1,\bar{P}_2,\dots,\bar{P}_n\}$.
Consider the depth $k$ tree when the players move in the order $1,2,\dots, k$.
Take a decision node for player $k-1$. This has $N$ children that are decision nodes for 
player $k$. Let the paths chosen by player $k$ at these nodes be $Q_1,Q_2,\dots,Q_N$, respectively.
Suppose that in response to this move, player $k-1$ chooses the path $P_{j}$. We claim that
$P_j=Q_j$.
\begin{eqnarray*}
c(P,\mathcal{T'}) 
&=& 
\sum_{e\in P/(Q_j\cup P_j)} \frac{c_e}{n_e}  +   \sum_{e\in P\cap Q_j\cap P_j} \frac{c_e}{n_e} \\
&& +
 \sum_{e\in (P\cap Q_j)/P_j} \frac{c_e}{n_e+1} +  \sum_{e\in (P\cap P_j)/Q_j} \frac{c_e}{n_e-1} \\
&=& c(P,\mathcal{T}) -  \sum_{e\in (P\cap Q_j)/P_j} \left( \frac{c_e}{n_e} -  \frac{c_e}{n_e+1}\right) \\
&&
+  \sum_{e\in (P\cap P_j)/Q_j} \left( \frac{c_e}{n_e-1} -  \frac{c_e}{n_e}\right)\\ 
\end{eqnarray*}
Thus
$$ 
c(Q_j,\mathcal{T'}) = c(Q_j,\mathcal{T})  - \sum_{e\in Q_j/P_j} \left( \frac{c_e}{n_e} -  \frac{c_e}{n_e+1}\right)
$$
Now since $c(Q_j,\mathcal{T}) \le  c(P,\mathcal{T})$ we have
\begin{eqnarray*}
c(P,\mathcal{T'}) 
&=& 
c(P,\mathcal{T}) -  \sum_{e\in (P\cap Q_j)/P_j} \left( \frac{c_e}{n_e} -  \frac{c_e}{n_e+1}\right)\\
&&
+  \sum_{e\in (P\cap P_j)/Q_j} \left( \frac{c_e}{n_e-1} -  \frac{c_e}{n_e}\right)\\ 
&\ge& c(Q_j,\mathcal{T}) -  \sum_{e\in (P\cap Q_j)/P_j} \left( \frac{c_e}{n_e} -  \frac{c_e}{n_e+1}\right)\\
&&
+  \sum_{e\in (P\cap P_j)/Q_j} \left( \frac{c_e}{n_e-1} -  \frac{c_e}{n_e}\right)\\ 
&\ge& c(Q_j,\mathcal{T}) -  \sum_{e\in (P\cap Q_j)/P_j} \left( \frac{c_e}{n_e} -  \frac{c_e}{n_e+1}\right)  \\ 
&\ge& c(Q_j,\mathcal{T}) -  \sum_{e\in Q_j/P_j} \left( \frac{c_e}{n_e} -  \frac{c_e}{n_e+1}\right)  \\ 
&=& c(Q_j,\mathcal{T'}) 
\end{eqnarray*}

This proves the claim.
Applying induction, we see that each player $1,\dots,k$ will play the same strategy $P^*$, and thus,
receive the same payoff.
Let's take the worst case choice for players $2,\dots,k$ from the point of view of player $1$.
If $P^*=P^{SP}$, the shortest $s-t$ path, then each of the $k$ chosen
players will have a cost of at most

$$\frac{c(P^{SP})}{k} \le \frac{\opt}{k}$$

Thus, if $P^*\neq P^{SP}$, then player $1$ can guarantee himself a cost of at most
$\frac{\opt}{k}$. This argument applies for all players so, in an equilibrium,
the total cost is at most, $\frac{n}{k} \opt$. 
\end{proof}

{\small

}

\end{document}